\newcommand{\comment}[1]{}
\newcommand{\wt}{w}
\newcommand{\fwt}{F_\wt}
\newcommand{\complexity}[1]{O(|X||\pc|^2 + #1^4|\ms{\pig{\pc}}|^2)}
\newcommand{\ms}[1]{\Delta_{#1}}
\newcommand{\minms}[2]{\Delta_{#1}^{\min}}
\newcommand{\pmc}[1]{\Pi_{#1}}
\newcommand{\pf}{\mathrm{pf}}
\newcommand{\ind}[1]{I_{#1}}
\newcommand{\pc}{\mathcal{C}}
\newcommand{\pig}[1]{\mathrm{int}(#1)}
\newcommand{\wfill}{\mathrm{fill}_{\fwt}}
\newcommand{\wmfi}{\mathrm{mfi}_{\fwt}}
\newcommand{\xt}[1]{\mathcal{T}{#1}}
\newcommand{\xtree}[1]{\xt{#1} = (T{#1},\phi{#1})}
\newtheorem{observation}{Observation}
\begin{document}

\title{Potential Maximal Clique Algorithms for Perfect Phylogeny Problems}
\author{Rob Gysel}
\institute{Department of Computer Science, University of California, Davis, 1 Shields Avenue, Davis CA 95616, USA\\ \email{rsgysel@ucdavis.edu}}

\maketitle

\begin{abstract}
Kloks, Kratsch, and Spinrad showed how treewidth and min\-imum-fill, NP-hard combinatorial optimization problems related to minimal triangulations, are broken into subproblems by block subgraphs defined by minimal separators.
These ideas were expanded on by Bouchitt{\'e} and Todinca, who used potential maximal cliques to solve these problems using a dynamic programming approach in time polynomial in the number of minimal separators of a graph.
It is known that solutions to the perfect phylogeny problem, maximum compatibility problem, and unique perfect phylogeny problem are characterized by minimal triangulations of the partition intersection graph.
In this paper, we show that techniques similar to those proposed by Bouchitt{\'e} and Todinca can be used to solve the perfect phylogeny problem with missing data, the two-state maximum compatibility problem with missing data, and the unique perfect phylogeny problem with missing data in time polynomial in the number of minimal separators of the partition intersection graph.
\end{abstract}

\section{Introduction}
The perfect phylogeny problem, also called the character compatibility problem, is a classic NP-hard \cite{BFW92,S92} problem in phylogenetics \cite{FB01,SS_Book03}.
Characters that have a perfect phylogeny are called homoplasy-free, i.e.\ they map to a tree with no horizontal evolutionary events such as recombination or gene transfer.
For a collection of partially labeled (a.k.a.\ missing data) unrooted trees, one can construct characters that have a perfect phylogeny precisely when the collection has a compatible supertree \cite{SS_Book03}.
The more general problem of supertree estimation is of wide interest.

Solutions to the perfect phylogeny problem are characterized by the existence of restricted (minimal) triangulations of the partition intersection graph \cite{B74,M83,S92}, and minimal triangulations of the partition intersection graph also play an important role in two variants of this problem. 
The first, the maximum compatibility problem, asks to find the largest subset of a set of given characters that has a perfect phylogeny \cite{BHS05,GG11}, and the second, asks if a set of characters has a unique perfect phylogeny\footnote{When a set of characters $\pc$ has a unique perfect phylogeny, it is also common in the literature to say that $\pc$ \emph{defines} an $X-$tree.} \cite{SS02,GH07}.
Interestingly, the unique perfect phylogeny problem is NP-hard even when a perfect phylogeny for the characters is given \cite{BLS12,HS13}.
Despite considerable advances in the field of minimal triangulations, to our knowledge these results have not been extended to the aforementioned problems, although the use of such methods to solve at least the perfect phylogeny problem may have been alluded to (see p.2 of \cite{FKTV08}).

Bouchitt{\'e} and Todinca \cite{BT02} used potential maximal cliques to create the first algorithm that solves minimum-fill and treewidth in time polynomial in $|\ms{G}|$, and this algorithm was improved upon in \cite{FKTV08}.
In this paper, we show how to extend the potential maximal clique approach to solve the perfect phylogeny problem, the maximum compatibility problem, and the unique perfect phylogeny problem.
This approach is motivated by the following: first, the algorithms in \cite{BT01,FKTV08} run in time polynomial in the number of minimal separators of the graph, and second, that data generated by the coalescent-based program \emph{ms} \cite{H02} often results in a partition intersection graph with a reasonable number of minimal separators \cite{G10}, despite there being an exponential number of minimal separators in general.
In order to unify our approach, we use a weighted variant of the well-studied minimum-fill problem, which is NP-hard \cite{Y81} and is an active area of research \cite{BHV11,FKTV08}.

Given full characters (a.k.a.\ \emph{complete data}), the perfect phylogeny problem is solvable in polynomial time when the number of characters is fixed \cite{MWW94} or when the number of parts is bounded \cite{AF94}.
Our results apply to the most general setting, where the characters may be partial (a.k.a.\ \emph{missing data}), and each character has unbounded parts (a.k.a.\ \emph{unbounded maxstates}).
See \cite{H06} for a survey on minimal triangulations, \cite{FB01,SS_Book03} for further reading on the perfect phylogeny / character compatibility problem, and \cite{GH07} for further reading on unique perfect phylogeny.

\section{Definitions and results}
\begin{figure}[t]
\centering
\subfigure[$\xt{}$]{
   \scalebox{.6}{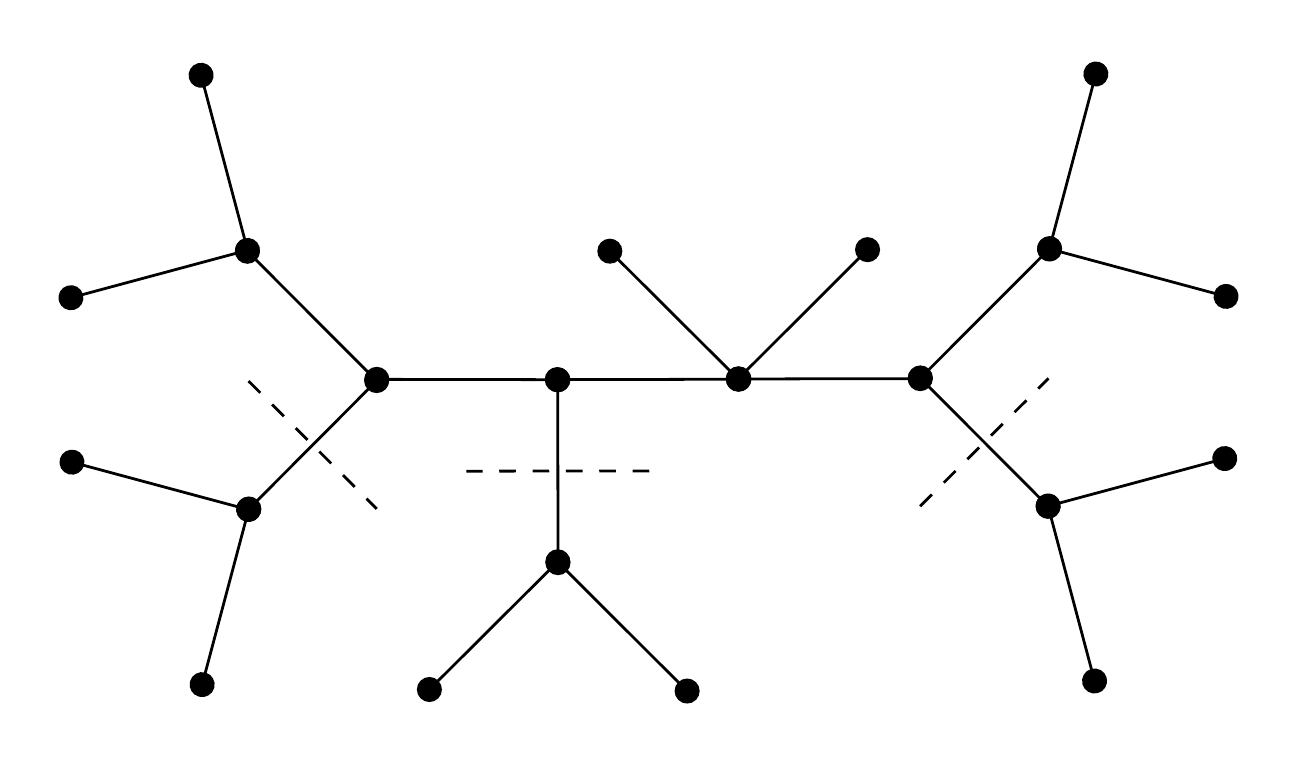}
 }
\subfigure[$\pig{\pc}$]{
   \scalebox{.6}{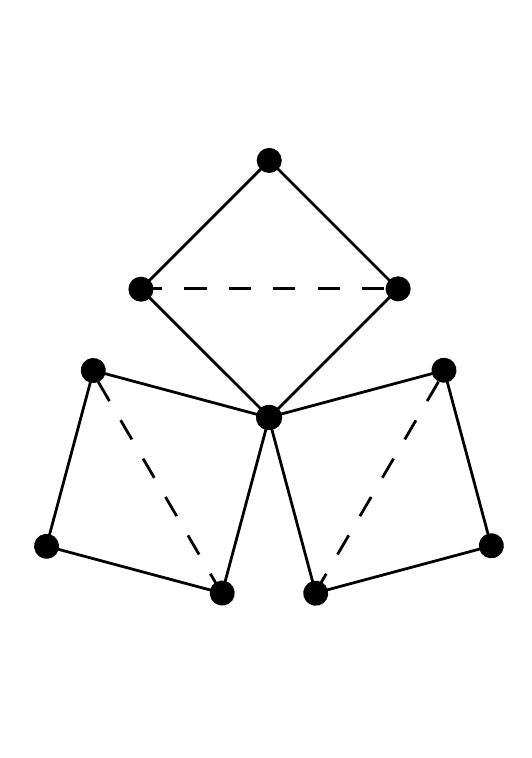}
 }
\caption{
An $X-$tree $\xt{}$ displaying $\pc = \{abcdef|gh|ij|kl, ag|dj|fl, bh|ci|ek\}$ and the corresponding partition intersection graph $\pig{\pc}$.
We use $A$ to denote $abcdef$, and have labeled the vertices by their cells.
The character $abcdef|gh|ij|kl$ distinguishes the edges of $\xt{}$ marked with dashes.
Removing these edges results in the four subtrees defined by $\xt{}(abcdef)$, $\xt{}(gh)$, $\xt{}(ij)$, and $\xt{}(kl)$.
The dashed edges of $\pig{\pc}$ define a proper triangulation, and the solid edges are obtained by cell intersection. 
If we replace $ag|dj|fl$ and $bh|ci|ek$ with the characters $ag|bh$, $ci|dj$, and $ek|fl$, we would obtain a partition intersection graph isomorphic to $\pig{\pc}$ but with a different coloring.
In that case, there is no proper triangulation because each four cycle has only two colors, and the fill edge $ag,bh$ is monochromatic.
Note that $\xt{}$ does not display $ag|bh$, $ci|dj$, or $ek|fl$.
}\label{fig_pp}
\end{figure}
An \emph{$X-$tree} is a pair $\xtree{}$, where $T$ is an undirected tree, and $\phi$ is a mapping from $X$ to the nodes of $T$ such that every node of $T$ with degree two or one is mapped to by $\phi$.
A \emph{character} on $X$ is a partition $\chi = A_1 | A_2 | \ldots | A_r$ of a subset of $X$.
For $i = 1, 2, \ldots, r$ the set $A_i$ is a \emph{cell} of $\chi$.
Given a cell $A$ of a character, the minimal subtree of $T$ that connects $\phi(A)$ is denoted $\xt{}(A)$.
An $X-$tree $\xt{}$ \emph{displays} a character $\chi$ if, for each pair of distinct cells $A$ and $A'$ of $\chi$, the trees $\xt{}(A)$ and $\xt{}(A')$ have no nodes in common.
Given a set $\pc$ of characters, the \emph{perfect phylogeny problem} is to determine if there is an $X-$tree $\xt{}$ that displays every character in $\pc$.
In this case, we call $\xt{}$ a \emph{perfect phylogeny} for $\pc$, and say that $\pc$ is \emph{compatible}.

The perfect phylogeny problem reduces to a graph theoretic problem that we detail now.
A graph is \emph{chordal} if any cycle it has on four or more vertices has a \emph{chord}, that is, an edge between two non-consecutive vertices in the cycle.
When $G$ is not chordal, we may add edges to $G$ to obtain a chordal supergraph $H$ that is called a \emph{triangulation} of $G$.
The edges added to $G$ to obtain $H$ are called \emph{fill edges} of $H$.
When no proper subset of $H$'s fill edges can be added to $G$ to obtain a triangulation, we call $H$ a \emph{minimal triangulation} of $G$.

Given a set of characters $\pc$, the \emph{partition intersection graph} $\pig{\pc}$ is the graph with vertex set $\{(A,\chi) \mid \chi \in \pc \mbox{ and } A \mbox{ is a cell of } \chi \}$, and two vertices $(A,\chi)$ and $(A',\chi')$ are adjacent in $\pig{\pc}$ if and only if $A$ and $A'$ have non-empty intersection.
If $A_1$ and $A_2$ are cells of a character $\chi$, then $A_1$ and $A_2$ are disjoint because $\chi$ is a partition of a subset of $X$, so $(A_1,\chi)$ and $(A_2,\chi)$ are not adjacent in $\pig{\pc}$.
The vertex $(A,\chi)$ has \emph{cell} $A$ and \emph{character} $\chi$.
A triangulation of $\pig{\pc}$ is \emph{proper} if, for each fill edge, the vertices involved in the fill edge have different characters.
This may be viewed as coloring each vertex $(A,\chi)$ of $\pig{\pc}$ by its character $\chi$, resulting in a properly colored graph, and then proper triangulations are those whose fill edges preserve the proper coloring.
If $u$ and $v$ are vertices of $\pig{\pc}$ that have the same character/color, we say that $u$ and $v$ are \emph{monochromatic}.
If a triangulation of $\pig{\pc}$ has $uv$ as an edge, we say that $uv$ is a \emph{monochromatic fill edge} of the triangulation.
See Figure \ref{fig_pp} for an example of these concepts.

For the remainder of this section, we characterize solutions to perfect phylogeny problems as constrained minimal triangulations of the partition intersection graph, and state our algorithmic results.
These problems will then be discussed in terms of minimum-weight minimal triangulations in Section 2, and we prove our computational results in Section 3, all of which rely on Algorithm \ref{alg_this_is_the_only_algorithm_in_the_paper}.
The connection between triangulations and perfect phylogeny stems from the following result.

\begin{theorem}\cite{B74,M83,S92}\label{thm_pig_pp}
Let $\pc$ be a set of characters on $X$.
Then $\pc$ is compatible if and only if $\pig{\pc}$ has a proper minimal triangulation.
\end{theorem}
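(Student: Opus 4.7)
The plan is to prove both directions using the standard characterization of chordal graphs as intersection graphs of subtrees of a tree (equivalently, the clique tree representation). This is the natural tool here because an $X$-tree is precisely the sort of host tree whose subtrees one wants to encode, and the partition intersection graph is by design the intersection graph of the cells viewed as point sets in $X$.

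For the forward direction, suppose $\pc$ is compatible and let $\xt{}$ be a perfect phylogeny. To each vertex $(A,\chi)$ of $\pig{\pc}$ I would associate the subtree $\xt{}(A)$ of $T$ and let $H$ be the intersection graph of this family. Because $H$ is an intersection graph of subtrees of a tree, it is chordal. It is a supergraph of $\pig{\pc}$ on the same vertex set: if $A \cap A' \ni x$, then $\phi(x)$ lies in both $\xt{}(A)$ and $\xt{}(A')$, so the two subtrees meet. Moreover, $H$ has no monochromatic edges, because for distinct cells $A,A'$ of the same character $\chi$ the display property forces $\xt{}(A) \cap \xt{}(A') = \emptyset$. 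Thus $H$ is a proper triangulation of $\pig{\pc}$, and any minimal triangulation $H'$ sandwiched between $\pig{\pc}$ and $H$ is also proper since its fill edges form a subset of the (bichromatic) fill edges of $H$.

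For the backward direction, suppose $H$ is a proper minimal triangulation of $\pig{\pc}$ and let $T^\ast$ be a clique tree of $H$, whose nodes are the maximal cliques of $H$. For each vertex $v$ of $H$, the maximal cliques containing $v$ induce a subtree $T^\ast_v$ of $T^\ast$. For each $x \in X$ the set $\{(A,\chi) : x \in A\}$ is a clique in $\pig{\pc}$ (any two members share $x$), hence in $H$, so it is contained in some maximal clique, which I take as $\phi(x)$. After suppressing degree-one and degree-two nodes not in the image of $\phi$, the pair $(T^\ast,\phi)$ is an $X$-tree $\xt{}$. To verify display, fix $\chi \in \pc$ and distinct cells $A,A'$. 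Then $(A,\chi)$ and $(A',\chi)$ are non-adjacent in $\pig{\pc}$ (cells of a character are disjoint) and in $H$ (any fill edge of $H$ is bichromatic by properness), so by the clique-tree property $T^\ast_{(A,\chi)} \cap T^\ast_{(A',\chi)} = \emptyset$. Since $\phi(A) \subseteq T^\ast_{(A,\chi)}$ by construction, the minimal subtrees $\xt{}(A)$ and $\xt{}(A')$ are node-disjoint, proving that $\xt{}$ displays $\chi$.

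The conceptually central step, and the main obstacle, is the backward direction: one must be careful that the properness of $H$ is actually what rules out shared nodes between $\xt{}(A)$ and $\xt{}(A')$. If $H$ had a monochromatic fill edge, the clique tree would allow $T^\ast_{(A,\chi)}$ and $T^\ast_{(A',\chi)}$ to meet and the display property would fail. The remaining technicalities, namely suppressing degree-one and degree-two nodes to satisfy the $X$-tree definition without destroying the display property, are routine because suppression preserves connectivity of each $\xt{}(A)$ and cannot create a new intersection between previously disjoint subtrees.
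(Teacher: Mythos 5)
Your proof is correct: the paper itself does not prove Theorem~\ref{thm_pig_pp} but cites it to \cite{B74,M83,S92}, and your argument is exactly the classical one from those sources --- realizing a proper triangulation as the intersection graph of the subtrees $\xt{}(A)$ for the forward direction, and reading a perfect phylogeny off a clique tree of $H$ for the converse. You also correctly supply the one refinement the paper adds on top of the cited result (that one may pass to a \emph{minimal} triangulation sandwiched inside $H$, whose fill edges remain bichromatic), so nothing is missing.
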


While Theorem \ref{thm_pig_pp} was not originally stated in terms of minimal triangulations, it follows from the definitions that there is a proper triangulation if and only if there is a proper minimal triangulation.
The set of minimal separators of $\pig{\pc}$ are denoted $\ms{\pig{\pc}}$ (their definition appears in Section 3).
Our first algorithmic result is the following theorem.

\begin{theorem}\label{result_pp}
Let $\pc$ be a set of characters on $X$ with at most $r$ parts per character.
There is an $\complexity{(r|\pc|)}$ time algorithm that solves the perfect phylogeny problem.
\end{theorem}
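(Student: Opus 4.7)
The plan is to reduce the perfect phylogeny problem on $\pc$ to a weighted minimum-fill computation on $\pig{\pc}$, and then invoke Algorithm~\ref{alg_this_is_the_only_algorithm_in_the_paper}, the weighted potential-maximal-clique dynamic program announced in Section~3.

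First I would construct $\pig{\pc}$. The graph has at most $r|\pc|$ vertices, since each of the $|\pc|$ characters contributes at most $r$ cells. A naive test for every pair of cells is too expensive, but scanning once through $X$ and recording, for each $x\in X$ and each $\chi\in\pc$, the unique cell of $\chi$ containing $x$, allows one to enumerate all adjacencies of $\pig{\pc}$ in $O(|X||\pc|^2)$ time. This accounts for the first summand of $\complexity{(r|\pc|)}$.

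Second, I would translate Theorem~\ref{thm_pig_pp} into a minimum-fill question. Define a weight function $\fwt$ on non-edges of $\pig{\pc}$ by setting $\fwt(uv)=0$ when $u$ and $v$ have distinct characters and $\fwt(uv)=1$ when $u$ and $v$ are monochromatic. For any minimal triangulation $H$ of $\pig{\pc}$, the total $\fwt$-weight of the fill edges of $H$ equals the number of monochromatic fill edges in $H$, so $H$ is proper if and only if its $\fwt$-weight is $0$. Combined with Theorem~\ref{thm_pig_pp}, $\pc$ is compatible if and only if the minimum $\fwt$-weight over all minimal triangulations of $\pig{\pc}$ is zero.

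Third, I would feed $\pig{\pc}$ and $\fwt$ into Algorithm~\ref{alg_this_is_the_only_algorithm_in_the_paper}. As promised in Section~3, this algorithm computes the minimum $\fwt$-weight of a minimal triangulation of an $n$-vertex graph $G$ in time $O(n^4|\ms{G}|^2)$; specialising to $n \le r|\pc|$ and $G=\pig{\pc}$ yields the second summand of $\complexity{(r|\pc|)}$. Comparing the returned value to $0$ decides the perfect phylogeny question. The main obstacle, and presumably the content of Sections~2 and~3 rather than of this proof itself, is to show that the weighted-minimal-triangulation framework genuinely captures proper minimal triangulations (so that the infimum is realised by a minimal triangulation and the weight-$0$ characterisation of properness is preserved under minimality) and to verify correctness and the claimed running time of Algorithm~\ref{alg_this_is_the_only_algorithm_in_the_paper} via potential maximal cliques. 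Once those two ingredients are in place, the theorem follows by the bookkeeping above.
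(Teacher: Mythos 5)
Your proposal is correct and follows essentially the same route as the paper: the weight function you define is exactly the paper's $\ind{\pc}$, your weight-zero characterisation of properness is the paper's Observation~\ref{obs_proper_zero} combined with Theorem~\ref{thm_pig_pp} (stated as Lemma~\ref{lem_pp_reduction}), and the reduction to the potential-maximal-clique dynamic program with the stated running time is precisely Theorem~\ref{thm_algorithm}. The ingredients you defer (realisability of the minimum by a minimal triangulation via non-negativity of $\fwt$, and the correctness/complexity of Algorithm~\ref{alg_this_is_the_only_algorithm_in_the_paper}) are indeed established separately in Sections~2 and~3, just as you anticipate.
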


If $\pc$ is not compatible, then the \emph{maximum compatibility problem} is to determine the largest subset $\pc^*$ of $\pc$ that is compatible, and $\pc^*$ is an \emph{optimal solution}.
In order to characterize solutions to the maximum compatibility problem in terms of minimal triangulations, we must consider non-proper triangulations of the partition intersection graph.
We say $\chi$ is \emph{broken by} a fill edge $(A,\chi)(A',\chi)$ because $\chi$ is the shared character of both vertices.
Given a triangulation $H$ of $\pig{\pc}$, the \emph{displayed characters} of $H$ are the characters of $\pc$ that are not broken by any fill edge.

\begin{theorem}\cite{BHS05,GG11}\label{thm_maxcompat}
Let $\pc$ be a set of characters on $X$.
Then $\pc^*$ is an optimal solution to the maximum compatibility problem if and only if there is a minimal triangulation $H^*$ of $\pig{\pc}$ that has $\pc^*$ as its displayed characters, and for every other minimal triangulation $H'$ of $\pig{\pc}$ with displayed characters $\pc'$, $|\pc'| \leq |\pc^*|$.
\end{theorem}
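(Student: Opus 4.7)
The plan is to factor the equivalence through two auxiliary statements, which together pin down both directions. First, an \emph{extension lemma}: if $\pc^*\subseteq\pc$ is compatible, then $\pig{\pc}$ admits a minimal triangulation whose displayed characters contain $\pc^*$. Second, a \emph{restriction lemma}: if $H$ is any triangulation of $\pig{\pc}$ and $\pc'$ is its set of displayed characters, then $\pc'$ is compatible. Given these, the equivalence of optimal subsets and displayed-count-maximum minimal triangulations follows by bookkeeping.

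For the extension lemma, I would apply Theorem~\ref{thm_pig_pp} to $\pc^*$ to obtain a proper minimal triangulation $H'$ of $\pig{\pc^*}$, and then form a supergraph $H$ of $\pig{\pc}$ whose edges are those of $\pig{\pc}$, together with the edges of $H'$ and every pair $uv$ such that at least one of $u,v$ lies in $V(\pig{\pc})\setminus V(\pig{\pc^*})$. Any induced 4-cycle of $H$ must have all four vertices inside $\pig{\pc^*}$---otherwise some vertex of the cycle is universal in $H$ and already provides a chord---and hence inherits a chord from $H'$. Thus $H$ is chordal. Its fill edges relative to $\pig{\pc}$ are either fill edges of $H'$, which by properness of $H'$ do not break any character of $\pc^*$, or incident to a vertex outside $\pig{\pc^*}$, whose character is not in $\pc^*$; so no character of $\pc^*$ is broken by any fill edge of $H$. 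Removing fill edges one at a time while preserving chordality yields a minimal triangulation $H^*\subseteq H$ of $\pig{\pc}$, and since the set of displayed characters can only grow when fill edges are deleted, $H^*$ still displays every character of $\pc^*$. The chordality verification for $H$ is the main technical point of this step; everything else is routine.

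For the restriction lemma, I would consider the induced subgraph $H[V(\pig{\pc'})]$, which is chordal as an induced subgraph of the chordal graph $H$. By the definition of displayed characters, no fill edge of $H$ whose endpoints both lie in $\pig{\pc'}$ is monochromatic, so $H[V(\pig{\pc'})]$ is a proper triangulation of $\pig{\pc'}$. Deleting fill edges to minimality keeps it proper, so $\pig{\pc'}$ admits a proper minimal triangulation and Theorem~\ref{thm_pig_pp} gives compatibility of $\pc'$.

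To conclude, glue the lemmas. For $(\Rightarrow)$, let $\pc^*$ be optimal; the extension lemma supplies a minimal triangulation $H^*$ whose displayed characters $\pc_1$ contain $\pc^*$, and the restriction lemma makes $\pc_1$ compatible, so optimality of $\pc^*$ forces $\pc_1=\pc^*$. For any other minimal triangulation $H'$ of $\pig{\pc}$, its displayed characters are compatible by the restriction lemma, so have size at most $|\pc^*|$ by optimality. For $(\Leftarrow)$, if $H^*$ realises $\pc^*$ and maximises the number of displayed characters among minimal triangulations of $\pig{\pc}$, then the restriction lemma gives compatibility of $\pc^*$; any strictly larger compatible $\pc^{**}\subseteq\pc$ would, via the extension lemma, produce a minimal triangulation with more than $|\pc^*|$ displayed characters, contradicting maximality.
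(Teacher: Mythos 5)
Your proof is correct and follows essentially the same route as the paper: the paper derives this statement (via its weighted generalization, Theorem~\ref{thm_weightedmaxcompat}) from exactly your two auxiliary lemmas, which appear there as Lemma~\ref{lem:improper_tri} and Lemma~\ref{lem:displayedchar_compatible}, followed by the same bookkeeping. The only thing you add is self-contained proofs of those two lemmas---which the paper imports from \cite{BHS05,GG11} without proof---and both of your arguments (the universal-vertex chordality check for the extension step, and the induced-subgraph argument for the restriction step) are sound.
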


Given a set of characters $\pc$, a \emph{character weight} is a function $\wt$ from $\pc$ to the positive real numbers (i.e.\ excluding zero).
For a subset $\pc'$ of $\pc$, define $\wt(\pc') = \sum_{\chi \in \pc'} \wt(\chi)$.
The \emph{$\wt-$maximum compatibility problem} is to find a the subset $\pc^*$ of $\pc$ such that $\wt(\pc^*) = \max \wt(\pc')$, where the maximum is taken over all compatible subsets $\pc'$ of $\pc$.
We generalize Theorem \ref{thm_maxcompat} below, and reserve its proof for Section 2.

\begin{theorem}\label{thm_weightedmaxcompat}
Let $\pc$ be a set of characters on $X$ with character weight $\wt$.
Then $\pc^*$ is an optimal solution to the $\wt-$maximum compatibility problem if and only if there is a minimal triangulation $H^*$ of $\pig{\pc}$ that has $\pc^*$ as its displayed characters, and for any other minimal triangulation $H'$ of $\pig{\pc}$ with displayed characters $\pc'$, $\wt(\pc') \leq \wt(\pc^*)$.
\end{theorem}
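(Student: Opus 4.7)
The plan is to reduce Theorem~\ref{thm_weightedmaxcompat} to the structure of the argument for Theorem~\ref{thm_maxcompat} in \cite{BHS05,GG11}: cardinality enters that argument only through its strict monotonicity under proper inclusion, a property also shared by $\wt$ because its values are required to be strictly positive. Once this is observed, the weighted proof is essentially a restatement of the unweighted proof, and the work splits into two set-theoretic facts relating minimal triangulations of $\pig{\pc}$ to compatible subsets of $\pc$.

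The first fact is that for every minimal triangulation $H$ of $\pig{\pc}$, the set $\pc_H$ of characters displayed by $H$ is itself compatible. This is immediate from the definitions: restricting $H$ to the vertices whose character lies in $\pc_H$ leaves a triangulation of $\pig{\pc_H}$ with no monochromatic fill edges, hence a proper triangulation, and Theorem~\ref{thm_pig_pp} (with the remark that any proper triangulation refines to a proper minimal one) yields compatibility. The second fact is that for every compatible $\pc' \subseteq \pc$, there exists a minimal triangulation $H$ of $\pig{\pc}$ with $\pc' \subseteq \pc_H$. This is the nontrivial combinatorial ingredient: one starts from a proper minimal triangulation of the induced subgraph $\pig{\pc'}$ and extends it to a minimal triangulation of $\pig{\pc}$ in such a way that no new fill edge is monochromatic within a character of $\pc'$. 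I would invoke this construction from \cite{BHS05,GG11} rather than reprove it, since it is weight-independent and is exactly what already powers Theorem~\ref{thm_maxcompat}.

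Granted these two facts, both directions of the theorem follow by short bookkeeping. For ($\Rightarrow$), assume $\pc^*$ is $\wt$-optimal; the second fact supplies a minimal triangulation $H^*$ with $\pc_{H^*} \supseteq \pc^*$, and the first fact makes $\pc_{H^*}$ compatible, so optimality of $\pc^*$ gives $\wt(\pc_{H^*}) \leq \wt(\pc^*)$. Since every character contributes strictly positive weight, this forces $\pc_{H^*} = \pc^*$, and for any other minimal triangulation $H'$ with displayed set $\pc'$ the first fact combined with optimality yields $\wt(\pc') \leq \wt(\pc^*)$. For ($\Leftarrow$), assume $H^*$ is a minimal triangulation whose displayed set $\pc^*$ maximizes $\wt$ among all such; the first fact makes $\pc^*$ compatible, and for any compatible $\pc'$ the second fact produces a minimal triangulation $H'$ with $\pc' \subseteq \pc_{H'}$, so positivity of $\wt$ together with maximality gives $\wt(\pc') \leq \wt(\pc_{H'}) \leq \wt(\pc^*)$.

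The main obstacle is the second fact. Even though the induced subgraph $\pig{\pc'}$ admits a proper minimal triangulation, the global constraint of being a minimal triangulation of $\pig{\pc}$ can force fill edges absent from the proper triangulation of $\pig{\pc'}$, and these new edges could in principle lie entirely within $\pig{\pc'}$ and be monochromatic in some $\chi \in \pc'$. Ruling this out is the genuine combinatorial step, and it is precisely the weight-independent construction established in \cite{BHS05,GG11}, so I would cite it rather than redo it.
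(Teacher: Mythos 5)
Your proposal is correct and follows essentially the same route as the paper: the two ``facts'' you isolate are exactly the paper's Lemmas~\ref{lem:improper_tri} and~\ref{lem:displayedchar_compatible} (both imported from \cite{BHS05,GG11} without reproof, just as you do), and the remaining bookkeeping, including the use of strict positivity of $\wt$ to force $\pc_{H^*} = \pc^*$, matches the paper's argument. Your converse is phrased slightly more directly (comparing against arbitrary compatible subsets via the second fact rather than via the $H^*$ built in the forward direction), but this is a cosmetic difference, not a different proof.
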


Our second algorithmic result is for two-state characters only.
Such characters are interesting because they are related to finding compatible supertrees.
In that context, an optimal solution to maximum compatibility corresponds to a supertree that agrees with the most edges from the partially labeled trees given as input.

\begin{theorem}\label{result_binary_cr}
Let $\pc$ be a set of ($\wt-$weighted) two-state characters on $X$, i.e., each $\chi \in \pc$ has two cells.
There is an $\complexity{|\pc|}$ time algorithm that solves the ($\wt$-)maximum compatibility problem.
\end{theorem}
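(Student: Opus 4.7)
The plan is to reduce the $\wt$-maximum compatibility problem for two-state characters to a minimum-weight minimal triangulation problem on $\pig{\pc}$ and then invoke Algorithm~\ref{alg_this_is_the_only_algorithm_in_the_paper} with an appropriate weight function $\fwt$. First, I would define $\fwt$ on the non-edges of $\pig{\pc}$ as follows: for each two-state character $\chi \in \pc$ with cells $A$ and $A'$, set $\fwt((A,\chi)(A',\chi)) = \wt(\chi)$, and set $\fwt$ equal to $0$ on every other non-edge. The key observation is that because $\chi$ has exactly two cells, the only monochromatic non-edge associated with $\chi$ is $(A,\chi)(A',\chi)$, so each character broken by a triangulation $H$ contributes exactly one nonzero term to $\wfill(H)$. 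Consequently,
\[
\wt(\text{displayed characters of } H) \;=\; \wt(\pc) \;-\; \wfill(H)
\]
for every triangulation $H$ of $\pig{\pc}$.

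Second, I would apply Theorem~\ref{thm_weightedmaxcompat}: an optimal $\wt$-maximum compatibility solution corresponds to a minimal triangulation $H^{*}$ whose displayed characters have maximum total weight, which by the identity above is the same as a minimal triangulation of $\pig{\pc}$ minimizing $\wfill$. Running Algorithm~\ref{alg_this_is_the_only_algorithm_in_the_paper} on $\pig{\pc}$ with weights $\fwt$ therefore returns the optimum value, and the optimal set $\pc^{*}$ is read off as those characters whose two cells are not joined by a fill edge of the returned $H^{*}$.

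Third, I would bound the running time. Constructing $\pig{\pc}$ and tabulating $\fwt$ reduce to computing pairwise cell intersections, which can be done in $O(|X||\pc|^2)$ time. Since every character has exactly two cells, $|V(\pig{\pc})| = 2|\pc|$, so the Bouchitté--Todinca--style dynamic program from Section~3 runs in $O(|\pc|^4 |\ms{\pig{\pc}}|^2)$ time, giving the total bound $\complexity{|\pc|}$.

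The main obstacle is conceptual rather than computational: the reduction crucially exploits the two-state assumption, and this is why the theorem is restricted to two-state characters. For general $r$-state characters, a character $\chi$ with $r > 2$ cells is broken as soon as any of its $\binom{r}{2}$ monochromatic non-edges appears as a fill edge; that objective is not additive over edges and therefore cannot be captured by a fixed weight assignment $\fwt$. With exactly two cells, however, there is a bijection between broken characters and the nonzero-weight fill edges of $H$, so the additive objective $\wfill$ faithfully encodes $\wt(\pc) - \wt(\text{displayed characters of } H)$ and the reduction to weighted minimum-fill is exact.
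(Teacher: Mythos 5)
Your proposal is correct and follows essentially the same route as the paper: your fill weight is exactly the paper's $\fwt$ induced by $\wt$, your identity $\wt(\pc(H)) = \wt(\pc) - \fwt(H)$ is Lemma~\ref{lem_fill_sum}, and combining it with Theorem~\ref{thm_weightedmaxcompat} to reduce to a $\fwt$-minimum minimal triangulation is precisely Theorem~\ref{thm_binary_cr_reduction}, after which Theorem~\ref{thm_algorithm} gives the stated running time. The only cosmetic slip is writing $\wfill(H)$ for the weight of a triangulation, which the paper denotes $\fwt(H)$ and reserves $\wfill$ for vertex subsets.
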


The \emph{unique perfect phylogeny problem} is to determine if a perfect phylogeny for a set of characters is the only perfect phylogeny for those characters.
An edge $uv$ of an $X-$tree $\xt{}$ is \emph{distinguished} by a character $\chi$ if contracting $uv$ results in an $X-$tree that does not display $\chi$, and $\xt{}$ is \emph{distinguished} by $\pc$ if each edge of $\xt{}$ is distinguished by a character of $\pc$.
An $X-$tree $\xtree{}$ is \emph{ternary} if every internal node of $T$ has degree three.
Semple and Steel characterized the existence of a unique perfect phylogeny as follows.

\begin{theorem}\label{thm:unique_pp}\cite{SS02}
Let $\pc$ be a set of characters on $X$.
Then $\pc$ has a unique perfect phylogeny $\xtree{}$ if and only if the following conditions hold:
\begin{description}
	\item[1.] there is a ternary perfect phylogeny $\xtree{}$ for $\pc$ and $\xt{}$ is distinguished by $\pc$;
	\item[2.] $\pig{\pc}$ has a unique proper minimal triangulation.
\end{description}
\end{theorem}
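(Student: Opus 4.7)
The plan is to leverage a sharpening of Theorem \ref{thm_pig_pp}: for any $X$-tree $\xt{}'$ that displays $\pc$, the graph $H_{\xt{}'}$ obtained from $\pig{\pc}$ by adding, for each internal node $v$ of $T'$, all edges between vertices whose cells have subtrees in $\xt{}'$ passing through $v$, is a proper triangulation of $\pig{\pc}$; moreover $H_{\xt{}'}$ is \emph{minimal} among proper triangulations precisely when $\xt{}'$ is ternary and every edge of $T'$ is distinguished by $\pc$. This gives a bijection between ternary perfect phylogenies distinguished by $\pc$ and proper minimal triangulations of $\pig{\pc}$, of which Theorem \ref{thm_pig_pp} is the mere existence half.

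For the forward direction, suppose $\xt{}$ is the unique perfect phylogeny for $\pc$. If some internal node of $T$ had degree at least four, one could resolve the polytomy in several inequivalent ways, each yielding a distinct $X$-tree that still displays every character of $\pc$ (the subtrees $\xt{}(A)$ can only become more separated under refinement); this contradicts uniqueness, so $\xt{}$ is ternary. Similarly, if some edge of $T$ were not distinguished by $\pc$, contracting that edge would produce a distinct $X$-tree still displaying every character of $\pc$; so every edge of $T$ is distinguished. This establishes condition 1. Two distinct proper minimal triangulations of $\pig{\pc}$ would, via the bijection, yield two distinct ternary distinguished perfect phylogenies, again contradicting uniqueness, establishing condition 2.

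For the backward direction, assume both conditions, let $\xt{}$ be the ternary perfect phylogeny distinguished by $\pc$ from condition 1, and let $H_{\xt{}}$ be its associated proper minimal triangulation. Let $\xt{}'$ be any perfect phylogeny for $\pc$. Then $H_{\xt{}'}$ is a proper triangulation of $\pig{\pc}$, and by removing fill edges while maintaining chordality one obtains a minimal triangulation $H^* \subseteq H_{\xt{}'}$; since $H^*$ is obtained from a proper triangulation by deletions it is itself proper, so by condition 2, $H^* = H_{\xt{}}$, giving $H_{\xt{}} \subseteq H_{\xt{}'}$. Under the clique-tree correspondence between $H_{\xt{}}$ and $\xt{}$, each additional fill edge in $H_{\xt{}'}$ corresponds to a merging of adjacent maximal cliques, i.e., to a contraction of an edge of $T$, so $\xt{}'$ is obtained from $\xt{}$ by contracting some set of edges of $T$. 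Each contracted edge is necessarily undistinguished by $\pc$ (else its contraction would break a character of $\pc$ displayed by $\xt{}$, so $\xt{}'$ would fail to display it), contradicting condition 1 unless no edges are contracted. Hence $\xt{}' = \xt{}$.

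The main obstacle is the sharpened bijection itself: Theorem \ref{thm_pig_pp} only asserts existence of a proper minimal triangulation under compatibility, so one must work out that ternariness of $T'$ corresponds to each maximal clique of $H_{\xt{}'}$ being precisely the cell-set at one internal node (so no fill edge is removable) and that distinguishedness corresponds to adjacent maximal cliques being incomparable. The related step showing that $H_{\xt{}} \subseteq H_{\xt{}'}$ implies that $\xt{}'$ is a contraction of $\xt{}$ also requires a careful clique-tree / minimal-separator analysis of proper triangulations of $\pig{\pc}$.
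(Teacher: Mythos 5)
The paper does not prove this theorem at all: it is quoted from Semple and Steel \cite{SS02} and used downstream as a black box, so there is no in-paper argument to compare against. Judged on its own, your proposal has a genuine gap. You make everything rest on a ``sharpened bijection'' between ternary distinguished perfect phylogenies and proper minimal triangulations, but that correspondence is precisely the mathematical content of the Semple--Steel theorem; once it is granted, the stated equivalence follows in a few lines, so deferring it as ``the main obstacle'' leaves essentially nothing proved. Worse, the bijection is false as you state it.

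Concretely, take $\pc = \{ab|cd\}$ on $X = \{a,b,c,d\}$. Then $\pig{\pc}$ consists of two isolated monochromatic vertices, is already chordal, and its unique (proper) minimal triangulation is itself. The quartet tree $\xt{}'$ with four leaves and two internal nodes displays $\pc$ and yields $H_{\xt{}'} = \pig{\pc}$, which is minimal among proper triangulations; yet the pendant edges of $\xt{}'$ are not distinguished by $\pc$, so the ``precisely when'' in your bijection fails. The map is not injective either: the two-node $X$-tree with $\phi(a)=\phi(b)$ and $\phi(c)=\phi(d)$ also maps to $\pig{\pc}$. The same example breaks the key step of your backward direction: there $H_{\xt{}} = H_{\xt{}'}$, but the quartet tree is a refinement of the two-node tree, not a contraction of it, so ``$H_{\xt{}} \subseteq H_{\xt{}'}$ implies $\xt{}'$ is obtained from $\xt{}$ by contracting edges of $T$'' is wrong (and cannot be repaired cheaply: extra fill edges in a chordal supergraph do not in general correspond to merging two adjacent maximal cliques of a clique tree). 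Finally, your forward direction implicitly reads ``unique perfect phylogeny'' as literal uniqueness among all $X$-trees displaying $\pc$; under that reading the example above satisfies conditions 1 and 2 while admitting two non-isomorphic perfect phylogenies, so the theorem itself would be false. The statement is only correct under Semple and Steel's notion of \emph{defining} an $X$-tree (every $X$-tree displaying $\pc$ is a refinement of $\xt{}$), and a correct proof must work with that equivalence on trees rather than with raw isomorphism classes; your polytomy-resolution and edge-contraction arguments, which treat any refinement or contraction as a ``distinct'' perfect phylogeny, do not survive that correction.
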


It is well known how to create a perfect phylogeny $\xtree{}$ for $\pc$ from a \emph{clique tree} of a proper minimal triangulation in polynomial time (e.g.\ see the proof of Lemma 5.1 in \cite{BHS05}), and a clique tree of a chordal graph can be computed in linear time \cite{BP92}.
Checking if $\xt{}$ is ternary and distinguished by $\pc$ is also easy to do: an edge $uv$ is distinguished by $\chi$ if and only if $u$ is a node of $\xt{}(A)$ and $v$ is a node of $\xt{}(A)$ for distinct cells $A$, $A'$ of $\chi$.
So if it is known that $\pig{\pc}$ has a unique proper minimal triangulation, it is possible to determine if $\pc$ has a unique perfect phylogeny in polynomial time.
On the other hand, it has recently been shown \cite{BLS12,HS13} that if a perfect phylogeny is given for a set of characters, it is still NP-hard to determine if it is the unique perfect phylogeny for those characters\footnote{These papers show that this problem is NP-hard even when the characters are \emph{quartet trees}, which in our setting correspond to characters of the form $ab | cd$.}.
That is, determining if $\pig{\pc}$ has a unique proper minimal triangulation is NP-hard \cite{HS13}.
This makes our last algorithmic result of interest.

\begin{theorem}\label{result_upp}
Let $\pc$ be a set of characters on $X$ with at most $r$ parts per character.
There is an $\complexity{(r|\pc|)}$ time algorithm that determines if $\pig{\pc}$ has a unique proper minimal triangulation, i.e. it solves the unique perfect phylogeny problem.
\end{theorem}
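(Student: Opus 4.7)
The plan is to apply Theorem~\ref{thm:unique_pp} as the correctness criterion, reducing the problem to (i) deciding whether $\pig{\pc}$ has a unique proper minimal triangulation and (ii) checking that the induced $X$-tree is ternary and distinguished by $\pc$. Task (i) is the computational heart, and I would drive it by running Algorithm~\ref{alg_this_is_the_only_algorithm_in_the_paper} with the same weighting used to prove Theorem~\ref{result_pp}: $\fwt(e) = +\infty$ on each monochromatic potential fill edge of $\pig{\pc}$ and $\fwt(e) = 0$ elsewhere, so by Theorem~\ref{thm_pig_pp} a proper minimal triangulation exists if and only if the minimum-weight minimal triangulation is finite. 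If the minimum is $+\infty$, there is no perfect phylogeny and the algorithm outputs ``no''.

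Otherwise, I would augment the Bouchitt\'e--Todinca-style dynamic program of Algorithm~\ref{alg_this_is_the_only_algorithm_in_the_paper} so that, alongside each subproblem's minimum $\fwt$-value, it also stores the number of minimal triangulations of that block attaining the value, capped at $2$. The recurrence mirrors the original: at a block, sum over potential maximal cliques $K$ that realize the block's minimum as topmost maximal clique the product of the counts of the sub-blocks into which $K$ splits it. If the root count is at least $2$, condition~2 of Theorem~\ref{thm:unique_pp} fails and the algorithm returns ``no''. The main obstacle is arguing that this count is exact: the key fact, which the entire Bouchitt\'e--Todinca framework rests on, is that in any minimal triangulation the maximal clique atop a block is uniquely determined and is a potential maximal clique of the block's subgraph. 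Consequently each minimal triangulation of $\pig{\pc}$ is generated by exactly one DP trace, so the product-sum recurrence counts without double-counting.

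If the root count equals $1$, I would extract the unique proper minimal triangulation $H^*$ via standard traceback, compute a clique tree of $H^*$ in linear time using \cite{BP92}, build the associated $X$-tree $\xt{}$ as in \cite{BHS05}, and then verify (in polynomial time) that $\xt{}$ is ternary and that every edge of $\xt{}$ is distinguished by some character in $\pc$, using the edge-level characterization stated immediately after Theorem~\ref{thm:unique_pp}. The augmentation of the DP adds only $O(1)$ bookkeeping per entry, so it preserves the running time of the unaugmented algorithm, and the condition~1 checks fit inside the $O(|X||\pc|^2)$ term, yielding the claimed $\complexity{(r|\pc|)}$ bound.
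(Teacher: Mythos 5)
Your reduction to Theorem~\ref{thm:unique_pp} and the final ternary/distinguished check are fine, but the computational core of your proposal --- augmenting the dynamic program to count optimal minimal triangulations, capped at $2$ --- rests on a claim that is false. You assert that each minimal triangulation is generated by exactly one DP trace because ``the maximal clique atop a block is uniquely determined.'' It is not. At the top level the recurrence branches over all $S \in \ms{\pig{\pc}}$ (or over all potential maximal cliques), and by Theorem~\ref{thm_mtt} a single minimal triangulation $H$ has $\ms{H} \subseteq \ms{\pig{\pc}}$ with $|\ms{H}|$ typically large, so $H$ is regenerated once for each of its own minimal separators; your capped count would then report at least $2$ even when $H$ is the unique proper minimal triangulation. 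The same problem recurs inside a block: for a full block $(S,C)$, a minimal triangulation of $R(S,C)$ may have several maximal cliques properly containing $S$, each an admissible ``top'' clique in the sense of Lemma~\ref{lem_mt_of_blocks} (take $S=\{s\}$ and $C$ the path $a\!-\!b\!-\!c$ with $s$ adjacent to $a$, $b$, $c$: the realization is already chordal and has the two maximal cliques $\{s,a,b\}$ and $\{s,b,c\}$, either of which serves as $K$), so the product--sum recurrence double-counts there as well. Making the count exact would require a canonical decomposition of each minimal triangulation, a genuinely nontrivial ingredient you have not supplied. A smaller issue: $+\infty$ is not a fill weight under the paper's definition (fill weights map into the non-negative reals); one should instead use $\ind{\pc}$ and test whether the minimum is $0$, as in the proof of Theorem~\ref{result_pp}.

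The paper avoids counting altogether. Theorem~\ref{thm_upp_reduction} shows that $\pig{\pc}$ has a unique proper minimal triangulation if and only if it has an $\ind{\pc}$-zero minimal triangulation and the set $\minms{\ind{\pc}}{\pig{\pc}}$ of minimal separators occurring in \emph{some} $\ind{\pc}$-minimum minimal triangulation is a maximal pairwise-parallel family. Algorithm~\ref{alg_this_is_the_only_algorithm_in_the_paper} already outputs $\minms{\ind{\pc}}{\pig{\pc}}$ as the set of $S$ with $\wmfi(S) = \wmfi(\pig{\pc})$, and pairwise parallelism is then verified directly by counting, for each pair $S,S'$, the components of $\pig{\pc}-S$ meeting $S'$, within the stated time bound. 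If you want to salvage your approach, this is the characterization to aim for: uniqueness is detected through the structure of the set of optimal minimal separators, not through a census of optimal triangulations.
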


\section{Characterizations via weighted minimum-fill}

In this section, we characterize solutions to the perfect phylogeny problem and maximum compatibility problem as a weighted-variant of the \emph{minimum-fill problem}, which asks for the fewest number of edges required to triangulate a graph.
A similar characterization will be given for solutions to the unique perfect phylogeny problem, that has an additional requirement on the minimal separators involved in zero-weight minimal triangulations.
In order for our results to be useful in the next section, each result will be given with respect to minimal triangulations.

Suppose $G$ is a non-complete graph.
If $U$ is a subset of $G$'s vertices, then the \emph{potential fill edges} $\pf(U)$ of $U$ are pairs of vertices of $U$ that are not edges of $G$.
A \emph{fill weight} on $G = (V,E)$ is a function $\fwt$ from $\pf(V)$ to the non-negative real numbers, i.e., including zero.
For a triangulation $H$ of $G$ with fill weight $\fwt$, the \emph{weight} of $H$ is $\fwt(H) = \sum \fwt(f)$ where the sum occurs over all fill edges of $H$.
We will call $H$ a \emph{$\fwt-$minimum triangulation} of $G$ if, for every other triangulation $H'$ of $G$, $\fwt(H) \leq \fwt(H')$.
In this case we write $\wmfi(G) = \fwt(H)$.
If $\fwt(H) = 0$, then $H$ is a \emph{$\fwt-$zero triangulation} of $G$.
If $H$ is a $\fwt-$minimum or $\fwt-$zero triangulation that is also a minimal triangulation of $G$, then $H$ is a \emph{$\fwt-$min\-imum minimal triangulation} or \emph{$\fwt-$zero minimal triangulation}, respectively.
Note that if a $\fwt-$zero triangulation exists, it must be a $\fwt-$minimum triangulation.
Additionally, because $\fwt$ is non-negative, there is always a minimal triangulation that is a $\fwt-$minimum triangulation.

\begin{definition}
Let $\pc$ be a set of characters on $X$.
Then $\ind{\pc}$ is the fill weight of $\pig{\pc}$ defined by
\begin{equation*}
\ind{\pc}(uv) = \left\{ \begin{array}{ll}
         1 & \mbox{if $u$ and $v$ are monochromatic};\\
         0 & \mbox{otherwise}.\end{array} \right.
\end{equation*}
\end{definition}

\begin{observation}\label{obs_proper_zero}
Let $\pc$ be a set of characters on $X$.
Then a triangulation $H$ of $\pig{\pc}$ is proper if and only if $\ind{\pc}(H) = 0$.
\end{observation}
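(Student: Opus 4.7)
The plan is to unwind the two definitions and show that each side of the equivalence translates into the same combinatorial condition on the fill edges of $H$: namely, that $H$ has no monochromatic fill edges.

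First I would recall that, by definition, $H$ is a proper triangulation of $\pig{\pc}$ precisely when every fill edge of $H$ joins two vertices with distinct characters, i.e., no fill edge of $H$ is monochromatic. Next I would expand the weight $\ind{\pc}(H)$ using the definition of the weight of a triangulation: it equals $\sum_f \ind{\pc}(f)$, where the sum runs over all fill edges $f$ of $H$. Since $\ind{\pc}$ takes only the values $0$ and $1$ -- with value $1$ exactly on monochromatic pairs -- this sum simply counts the number of monochromatic fill edges of $H$.

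From here the observation is immediate in both directions. If $H$ is proper, no fill edge is monochromatic, so every term in the sum is $0$ and $\ind{\pc}(H) = 0$. Conversely, since $\ind{\pc}$ is non-negative and integer-valued, $\ind{\pc}(H) = 0$ forces every fill edge of $H$ to contribute $0$, hence to be non-monochromatic, which is exactly the condition that $H$ is proper.

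There is no real obstacle here; the statement is essentially a definitional rewriting whose only subtlety is to note that $\ind{\pc}$ is well-defined on all of $\pf(V)$ (so in particular on the fill edges of any triangulation $H$) and that fill edges of $\pig{\pc}$ are always between vertices of different cells of possibly the same character -- so the notion of monochromatic fill edge makes sense and coincides with the condition appearing in the definition of a proper triangulation. I would therefore present the argument as two short sentences, one per direction, after stating the rewriting of $\ind{\pc}(H)$ as a count of monochromatic fill edges.
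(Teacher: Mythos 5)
Your proposal is correct and matches the paper's (implicit) reasoning exactly: the paper states this as an Observation with no written proof precisely because, as you note, $\ind{\pc}(H)=\sum_f \ind{\pc}(f)$ counts the monochromatic fill edges of $H$, and this count is zero if and only if no fill edge is monochromatic, which is the definition of a proper triangulation. Nothing is missing.
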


\begin{lemma}\label{lem_pp_reduction}
A collection $\pc$ of characters on $X$ are compatible if and only if $\pig{\pc}$ has a $\ind{\pc}$-zero minimal triangulation.
\end{lemma}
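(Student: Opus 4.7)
The plan is to chain together the two results already stated in the excerpt: Theorem \ref{thm_pig_pp} (Buneman / Meacham / Steel), which converts compatibility of $\pc$ into the existence of a proper minimal triangulation of $\pig{\pc}$, and Observation \ref{obs_proper_zero}, which identifies proper triangulations with $\ind{\pc}$-zero triangulations. Since both statements already apply in both directions, the bulk of the proof is just a matter of combining them carefully on the class of minimal triangulations.

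For the forward direction I would start by assuming $\pc$ is compatible. Applying Theorem \ref{thm_pig_pp} gives a proper minimal triangulation $H$ of $\pig{\pc}$. Applying Observation \ref{obs_proper_zero} to this $H$ yields $\ind{\pc}(H) = 0$, and since $H$ is already minimal, $H$ is a $\ind{\pc}$-zero minimal triangulation, as required. For the reverse direction, suppose $\pig{\pc}$ has a $\ind{\pc}$-zero minimal triangulation $H$. Then $\ind{\pc}(H) = 0$ means no fill edge of $H$ is monochromatic, so by Observation \ref{obs_proper_zero} the triangulation $H$ is proper. Since $H$ is also a minimal triangulation, Theorem \ref{thm_pig_pp} implies that $\pc$ is compatible.

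The only small thing to pin down is that Observation \ref{obs_proper_zero} is phrased in terms of arbitrary triangulations, not specifically minimal ones; but the observation's equivalence is edge-by-edge (a fill edge is monochromatic iff it contributes $1$ to $\ind{\pc}(H)$), so restricting attention to minimal triangulations causes no loss. There is no real obstacle here: the lemma is essentially a reformulation of Theorem \ref{thm_pig_pp} in the weighted-fill language introduced earlier in the section, and the purpose of stating it is to make the rest of Section~2 (where the perfect phylogeny, maximum compatibility and unique perfect phylogeny problems are all expressed as $\fwt$-minimum minimal triangulation problems for suitable weight functions $\fwt$) notationally uniform.
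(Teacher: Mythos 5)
Your proof is correct and matches the paper's approach exactly: the paper simply cites Theorem \ref{thm_pig_pp} and Observation \ref{obs_proper_zero} in a one-line proof, and your argument is just that chain spelled out in both directions. Nothing further is needed.
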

\begin{proof}
The lemma follows from Theorem \ref{thm_pig_pp} and Observation \ref{obs_proper_zero}.
\qed\end{proof}

The following two lemmas, which follow from results in \cite{BHS05,GG11}, will be helpful for proving Theorem \ref{thm_weightedmaxcompat}.

\begin{lemma}\label{lem:improper_tri}
Suppose $\pc$ is a set of characters and $\pc' \subseteq \pc$ is compatible.
Then there is a minimal triangulation of $\pig{\pc}$ and $\pc'$ is a subset of its displayed characters.
\end{lemma}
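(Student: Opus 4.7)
The plan is to build a (possibly non-minimal) triangulation of $\pig{\pc}$ directly from a perfect phylogeny for $\pc'$, and then prune fill edges to obtain a minimal triangulation without undoing the property that $\pc'$ is displayed.

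Since $\pc'$ is compatible, I would fix a perfect phylogeny $\xt{}' = (T', \phi')$ for $\pc'$. For every cell $A$ of every character $\chi \in \pc$ (including those in $\pc \setminus \pc'$), the minimal subtree $\xt{}'(A)$ of $T'$ spanning $\phi'(A)$ is well defined, since $\phi'$ is a map from all of $X$ to $V(T')$. I would then define $H_{\xt{}'}$ to be the graph on vertex set $V(\pig{\pc})$ in which $(A,\chi)$ is adjacent to $(A',\chi')$ exactly when $\xt{}'(A)$ and $\xt{}'(A')$ share a node of $T'$. Three facts drive the argument. First, by Gavril's classical characterization of chordal graphs as intersection graphs of subtrees of a tree, $H_{\xt{}'}$ is chordal. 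Second, $H_{\xt{}'}$ contains $\pig{\pc}$ as a spanning subgraph: any edge $(A,\chi)(A',\chi')$ of $\pig{\pc}$ arises from $A \cap A' \neq \emptyset$, which forces $\phi'(A) \cap \phi'(A') \neq \emptyset$ and hence $\xt{}'(A) \cap \xt{}'(A') \neq \emptyset$, so $H_{\xt{}'}$ is a triangulation of $\pig{\pc}$. Third, no character of $\pc'$ is broken by $H_{\xt{}'}$: for $\chi \in \pc'$ and distinct cells $A, A'$ of $\chi$, the fact that $\xt{}'$ displays $\chi$ gives $\xt{}'(A) \cap \xt{}'(A') = \emptyset$, so $(A,\chi)(A',\chi)$ is not an edge of $H_{\xt{}'}$.

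To finish, I would invoke the routine observation that any triangulation of a graph contains a minimal triangulation of that graph as a spanning subgraph, obtained by iteratively deleting fill edges whose removal preserves chordality. Applying this to $H_{\xt{}'}$ produces a minimal triangulation $H$ of $\pig{\pc}$ whose fill-edge set is contained in that of $H_{\xt{}'}$; in particular, $H$ has no monochromatic fill edge for any $\chi \in \pc'$, so $\pc'$ is contained in the displayed characters of $H$. The only substantive step is the appeal to Gavril's characterization of chordal graphs; the remainder is bookkeeping directly from the definitions, and avoids the subtle issue that a minimal triangulation of $\pig{\pc} \cup (\text{fill of some proper triangulation of } \pig{\pc'})$ need not be a minimal triangulation of $\pig{\pc}$.
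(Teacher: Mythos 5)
Your proof is correct, and it is essentially the argument the paper defers to by citation (Lemmas of \cite{BHS05,GG11}): build the subtree intersection graph of all cells of $\pc$ over a perfect phylogeny for $\pc'$, observe it is a chordal supergraph of $\pig{\pc}$ with no monochromatic edge on any $\chi \in \pc'$, and prune to a minimal triangulation, which can only delete fill edges. The paper gives no inline proof, so your write-up is a valid self-contained substitute; your closing remark about why one cannot simply extend a proper triangulation of $\pig{\pc'}$ correctly identifies the pitfall the tree-based construction avoids.
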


\begin{lemma}\label{lem:displayedchar_compatible}
Suppose $\pc$ is a set of characters and $H$ is a triangulation of $\pig{\pc}$.
Then the displayed characters of $H$ are a compatible subset of $\pc$.
\end{lemma}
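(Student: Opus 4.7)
The plan is to restrict the triangulation $H$ to the vertex set of $\pig{\pc^*}$, where $\pc^*$ denotes the displayed characters of $H$, and show that the resulting graph is a proper triangulation of $\pig{\pc^*}$. Compatibility of $\pc^*$ then follows from Theorem \ref{thm_pig_pp}, using the fact noted immediately after that theorem that a proper triangulation always contains a proper minimal triangulation (equivalently, one may note that Observation \ref{obs_proper_zero} and Lemma \ref{lem_pp_reduction} can be applied to $\pig{\pc^*}$).

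Concretely, I would let $V^*$ denote the set of vertices of $\pig{\pc}$ whose character lies in $\pc^*$, and consider the induced subgraph $H[V^*]$. First I would verify that $H[V^*]$ is a triangulation of $\pig{\pc^*}$. Chordality is automatic since induced subgraphs of chordal graphs are chordal. For the containment $\pig{\pc^*} \subseteq H[V^*]$, observe that adjacency in a partition intersection graph depends only on the cells of its endpoints, so $\pig{\pc^*}$ is precisely the induced subgraph $\pig{\pc}[V^*]$; combined with $\pig{\pc} \subseteq H$, taking the induced subgraph on $V^*$ preserves the containment.

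Next I would check properness. Any fill edge of $H[V^*]$, viewed as a triangulation of $\pig{\pc^*}$, is an edge of $H$ with both endpoints in $V^*$ that is not an edge of $\pig{\pc}$; in particular it is a fill edge of $H$. If such an edge were monochromatic with common character $\chi$, then $\chi$ would be broken by this fill edge of $H$, contradicting $\chi \in \pc^*$. Hence $H[V^*]$ has no monochromatic fill edges and is a proper triangulation of $\pig{\pc^*}$, so $\pc^*$ is compatible.

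I do not expect a genuine obstacle here; the argument is essentially bookkeeping. The only point that needs care is the identification of fill edges of $H[V^*]$ with fill edges of $H$ whose endpoints both lie in $V^*$, which rests precisely on the fact that $\pig{\pc^*}$ coincides with the subgraph of $\pig{\pc}$ induced by $V^*$ and not something strictly smaller.
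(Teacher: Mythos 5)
Your argument is correct. The paper itself gives no proof of this lemma; it simply asserts that it ``follows from results in \cite{BHS05,GG11},'' so your self-contained derivation is a genuinely different route. The key observation you rely on --- that $\pig{\pc^*}$ coincides with the subgraph of $\pig{\pc}$ induced by the vertices whose character lies in $\pc^*$, because adjacency in a partition intersection graph is determined solely by cell intersection --- is exactly what makes the restriction argument go through: chordality passes to induced subgraphs, the containment $\pig{\pc^*} \subseteq H[V^*]$ is inherited from $\pig{\pc} \subseteq H$, and every fill edge of $H[V^*]$ over $\pig{\pc^*}$ is a fill edge of $H$, hence non-monochromatic by the definition of displayed characters. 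Combined with the remark after Theorem \ref{thm_pig_pp} that a proper triangulation can be pruned to a proper minimal triangulation, this yields compatibility of $\pc^*$. What your approach buys is independence from the external references (and it makes transparent why only the fill edges internal to $V^*$ matter); what the paper's citation buys is brevity, since \cite{BHS05,GG11} establish the correspondence between triangulations of the partition intersection graph and trees directly. One small point worth making explicit if you write this up: an edge of $H[V^*]$ that is not an edge of $\pig{\pc^*}$ is not an edge of $\pig{\pc}$ either, precisely because both endpoints lie in $V^*$ and $\pig{\pc^*}$ is the full induced subgraph --- you flag this correctly as the only step requiring care.
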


\textit{(Proof of Theorem \ref{thm_weightedmaxcompat})}
Let $\pc^*$ be an optimal solution to the $\wt-$maximum compatibility problem.
By Lemma \ref{lem:improper_tri}, there is a minimal triangulation $H^*$ of $\pig{\pc}$ that has at least $\pc^*$ as its displayed characters.
Displayed character sets are compatible by Lemma \ref{lem:displayedchar_compatible}, so by positivity of $\wt$ and optimality of $\pc^*$, the displayed characters of $H^*$ are exactly $\pc^*$.
If $H'$ is another minimal triangulation of $\pig{\pc}$ with displayed character set $\pc(H')$, then $\pc(H')$ is compatible by Lemma \ref{lem:displayedchar_compatible}, so $\wt(\pc(H')) \leq \wt(\pc^*)$ by optimality of $\pc^*$.

For the converse, let $H$ be a minimal triangulation of $\pig{\pc}$ with displayed characters $\pc(H)$, and suppose $\wt(\pc(H))$ is greater than the weight of the displayed characters of any other minimal triangulation of $\pig{\pc}$.
Then $\wt(\pc^*) \leq \wt(\pc(H))$ because $\pc^*$ are the displayed characters of $H^*$.
By Lemma \ref{lem:displayedchar_compatible} the set $\pc(H)$ is compatible, so $\wt(\pc^*) = \wt(\pc(H))$ by optimality of $\pc^*$.
Therefore $\pc(H)$ is an optimal solution.
\qed

\begin{definition}
Let $\pc$ be a set of characters on $X$ that are weighted by $\wt$.
Then the fill weight $\fwt$ of $\pig{\pc}$ induced by $\wt$ is
\begin{equation*}
\fwt(uv) = \left\{ \begin{array}{ll}
         \wt(\chi) & \mbox{if $u$ are $v$ monochromatic and colored by $\chi$};\\
         0 & \mbox{otherwise}.\end{array} \right.
\end{equation*}
\end{definition}

\begin{lemma}\label{lem_fill_sum}
Let $\pc$ be a collection of two-state characters weighted by $\wt$, and suppose $H$ is a triangulation of $\pig{\pc}$ with displayed characters $\pc(H)$.
Then $\wt(\pc) = \fwt(H) + \wt(\pc(H))$.
\end{lemma}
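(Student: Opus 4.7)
The plan is to exploit the two-state hypothesis by establishing a weight-preserving bijection between the broken characters of $\pc$ (with respect to $H$) and the monochromatic fill edges of $H$. Once this bijection is in hand, the identity is a one-line computation.

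First I would partition $\pc$ into the displayed characters $\pc(H)$ and the broken characters $\pc\setminus\pc(H)$, so that $\wt(\pc)=\wt(\pc(H))+\wt(\pc\setminus\pc(H))$. It then suffices to show that $\fwt(H)=\wt(\pc\setminus\pc(H))$. Because $\fwt$ vanishes on non-monochromatic pairs, $\fwt(H)$ is a sum of $\wt(\chi)$ over the monochromatic fill edges of $H$, indexed by their common character $\chi$.

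The key step, and the only place where the two-state assumption enters, is the following observation. If $\chi\in\pc$ has cells $A_1$ and $A_2$, then $\pig{\pc}$ contains exactly two vertices colored by $\chi$, namely $(A_1,\chi)$ and $(A_2,\chi)$, and by the definition of $\pig{\pc}$ these two vertices are non-adjacent in $\pig{\pc}$. Consequently there is exactly one potential monochromatic pair of color $\chi$: the pair $\{(A_1,\chi),(A_2,\chi)\}$. Thus $\chi$ is broken by $H$ if and only if this unique pair appears as a (necessarily monochromatic) fill edge of $H$, and in that case $\chi$ contributes precisely $\wt(\chi)$ to $\fwt(H)$.

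This gives the desired bijection $\chi\mapsto(A_1,\chi)(A_2,\chi)$ between $\pc\setminus\pc(H)$ and the monochromatic fill edges of $H$, and it is weight-preserving by the definition of $\fwt$. Summing yields $\fwt(H)=\sum_{\chi\in\pc\setminus\pc(H)}\wt(\chi)=\wt(\pc\setminus\pc(H))$, and combined with the partition above this gives $\wt(\pc)=\fwt(H)+\wt(\pc(H))$. The main (and only) obstacle is recognizing that the two-state hypothesis is what forces each broken character to contribute exactly one monochromatic fill edge; for characters with $r\geq 3$ cells, a single broken character could in principle produce several monochromatic fill edges, so the identity would no longer hold in the form stated.
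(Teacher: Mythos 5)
Your proposal is correct and follows essentially the same route as the paper's proof: both hinge on the observation that the two-state hypothesis gives a one-to-one, weight-preserving correspondence between characters and monochromatic potential fill edges, each of which is either a fill edge of $H$ (a broken character) or not (a displayed character). The only difference is presentational — you partition the character set and match broken characters to fill edges, while the paper sums $\fwt$ over the monochromatic potential fill edges and splits that sum over $E(H)$ and its complement — but the underlying bijection is identical.
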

\begin{proof}
For each $\chi$ in $\pc$ there is exactly one potential fill edge $uv$ of $\pig{\pc}$ such that $u$ and $v$ are monochromatic with shared character $\chi$ because $\chi$ has two states.
In particular, if $\chi = A | A'$ then $u = (A,\chi)$ and $v = (A',\chi)$.
Hence there is a one-to-one correspondence between characters in $\pc$ and potential fill edges stemming from monochromatic pairs of vertices of $\pig{\pc}$.
Further, each monochromatic pair of vertices is either a fill edge of $H$, or it corresponds to a displayed character of $H$.
Any other potential fill edge $u'v'$ of $\pig{\pc}$ that does not arise in this way is not monochromatic, and in this case $\fwt(u'v') = 0$.
Letting $X$ be the set of monochromatic potential fill edges of $\pig{\pc}$, we have
\begin{align*}
	\wt(\pc) &= \sum_{f \in X} \fwt(f) \\
			&= \sum_{f \in X \cap E(H)} \fwt(f) + \sum_{f \in X - E(H)}  \fwt(f) \\
			&= \fwt(H) + \wt(\pc(H)) \enspace .
\end{align*}
\qed\end{proof}

\begin{theorem}\label{thm_binary_cr_reduction}
Let $\pc$ be a collection of two-state characters weighted by $\wt$.
Then $\pc^*$ is a $\wt-$maximum compatible subset of $\pc$ if and only if there is a $\fwt$-minimum minimal triangulation $H^*$ of $\pig{\pc}$ that has $\pc^*$ as its displayed characters.
\end{theorem}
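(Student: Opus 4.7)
The plan is to reduce this to Theorem \ref{thm_weightedmaxcompat} via the identity in Lemma \ref{lem_fill_sum}. Since the characters in $\pc$ are two-state, Lemma \ref{lem_fill_sum} gives, for every triangulation $H$ of $\pig{\pc}$, the identity $\fwt(H) = \wt(\pc) - \wt(\pc(H))$. The right-hand side differs from $-\wt(\pc(H))$ by a constant that depends only on $\pc$ and $\wt$, so over any fixed family of triangulations of $\pig{\pc}$, a triangulation minimizes $\fwt$ if and only if it maximizes $\wt(\pc(H))$.

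With this observation in hand, I would argue both directions in parallel. Recall from the text preceding the theorem that some minimal triangulation always achieves the $\fwt$-minimum over all triangulations, so the minimum of $\fwt$ over minimal triangulations coincides with $\wmfi(\pig{\pc})$; combined with the displayed-character identity, this means a minimal triangulation $H^*$ is $\fwt$-minimum if and only if its displayed character set has weight at least $\wt(\pc(H'))$ for every other minimal triangulation $H'$ of $\pig{\pc}$.

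For the forward direction, suppose $\pc^*$ is a $\wt$-maximum compatible subset. Theorem \ref{thm_weightedmaxcompat} yields a minimal triangulation $H^*$ of $\pig{\pc}$ whose displayed characters are exactly $\pc^*$ and whose displayed-character weight is maximum among all minimal triangulations. By the equivalence above, $H^*$ is then a $\fwt$-minimum minimal triangulation. For the converse, suppose $H^*$ is a $\fwt$-minimum minimal triangulation with displayed characters $\pc^*$. The same equivalence shows that $\pc^*$ has maximum displayed-character weight among displayed character sets of minimal triangulations of $\pig{\pc}$, and Theorem \ref{thm_weightedmaxcompat} then identifies $\pc^*$ as an optimal solution to the $\wt$-maximum compatibility problem.

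The only delicate point is making sure that ``minimum over minimal triangulations'' is genuinely the same as ``$\fwt$-minimum'' in the sense defined earlier (minimum over \emph{all} triangulations of $\pig{\pc}$); this is handled by the remark that a minimal triangulation always attains the $\fwt$-minimum, which is used implicitly in both directions. Beyond that, the argument is a one-line bookkeeping consequence of Lemma \ref{lem_fill_sum}, and the theorem should follow with no case analysis and no additional structural work.
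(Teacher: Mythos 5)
Your proposal is correct and follows essentially the same route as the paper's proof: both reduce the claim to Theorem~\ref{thm_weightedmaxcompat} via the identity $\fwt(H) = \wt(\pc) - \wt(\pc(H))$ from Lemma~\ref{lem_fill_sum}, together with the observation that a minimal triangulation always attains the global $\fwt$-minimum. The only difference is presentational --- you state the fill-minimization/weight-maximization correspondence up front, whereas the paper argues the forward direction by contradiction --- and your explicit handling of the ``minimum over all triangulations versus over minimal triangulations'' point is a welcome clarification.
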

\begin{proof}
Suppose that $\pc^*$ is a $\wt-$maximum compatible subset of $\pc$.
By Theorem \ref{thm_weightedmaxcompat}, there is a minimal triangulation $H^*$ of $\pig{\pc}$ that has $\pc^*$ as its displayed characters.
For the sake of contradiction suppose $H^*$ is not a $\fwt-$minimum minimal triangulation, so there is a triangulation $H$ of $\pig{\pc}$ such that $\fwt(H) < \fwt(H^*)$.
Letting $\pc(H)$ be the displayed characters of $H$, by Lemma \ref{lem_fill_sum} we have $\wt(\pc) - \wt(\pc(H)) < \wt(\pc) - \wt(\pc^*)$ and therefore $\wt(\pc^*) < \wt(\pc(H))$.
This contradicts the optimality of $\pc^*$, so $H^*$ must be a $\fwt-$minimum minimal triangulation.

Now let $H'$ be a $\fwt-$minimum minimal triangulation of $\pig{\pc}$ with displayed characters $\pc(H')$.
Then $\fwt(H') = \fwt(H^*)$ by $\fwt-$minimization, and $\wt(\pc) - \wt(\pc(H')) = \wt(\pc) - \wt(\pc^*)$ by Lemma \ref{lem_fill_sum} so $\wt(\pc(H')) = \wt(\pc^*)$.
The set $\pc(H)$ is compatible by Lemma \ref{lem:displayedchar_compatible}, so $\pc(H)$ is an optimal solution.
\qed\end{proof}

The weighted maximum compatibility problem can be used to solve the maximum compatibility problem by using the character weight where each $\chi \in \pc$ has weight one.
This character weighting induces the fill weight $\ind{\pc}$, giving the following corollary.

\begin{corollary}\label{cor_binary_cr_reduction}
Let $\pc$ be a collection of two-state characters.
Then $\pc^*$ is a maximum compatible subset of $\pc$ if and only if there is a $\ind{\pc}$-minimum minimal triangulation $H^*$ of $\pig{\pc}$ that has $\pc^*$ as its displayed characters.
\end{corollary}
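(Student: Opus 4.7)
The plan is simply to reduce the unweighted statement to the already-proved weighted version, Theorem \ref{thm_binary_cr_reduction}. I would begin by defining the unit character weight $\wt$ on $\pc$ by $\wt(\chi) = 1$ for every $\chi \in \pc$; this weight is positive, so Theorem \ref{thm_binary_cr_reduction} applies.

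Next I would verify two identifications. First, under this weighting, $\wt(\pc') = |\pc'|$ for every $\pc' \subseteq \pc$, so the $\wt$-maximum compatibility problem coincides with the (unweighted) maximum compatibility problem, and in particular $\pc^*$ is a maximum compatible subset of $\pc$ if and only if it is a $\wt$-maximum compatible subset. Second, the fill weight $\fwt$ on $\pig{\pc}$ induced by this $\wt$ assigns value $\wt(\chi) = 1$ to each potential fill edge $uv$ whose endpoints are monochromatic and $0$ otherwise; this is precisely $\ind{\pc}$ by definition. Consequently a triangulation $H$ of $\pig{\pc}$ is $\fwt$-minimum (respectively, $\fwt$-minimum minimal) if and only if it is $\ind{\pc}$-minimum (respectively, $\ind{\pc}$-minimum minimal).

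Having made these identifications, applying Theorem \ref{thm_binary_cr_reduction} directly yields the biconditional: $\pc^*$ is a maximum compatible subset of $\pc$ iff there is a $\ind{\pc}$-minimum minimal triangulation $H^*$ of $\pig{\pc}$ whose displayed characters are $\pc^*$. There is no real obstacle here, since both the objective function and the fill weight transform in the obvious way; the only care needed is to check that the two substitutions (unit $\wt$ for $|\cdot|$ on character sets, and induced $\fwt$ for $\ind{\pc}$ on fill edges) are exactly compatible, which is immediate from the definitions.
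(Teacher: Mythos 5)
Your proposal is correct and matches the paper's own argument: the paper likewise derives the corollary from Theorem \ref{thm_binary_cr_reduction} by taking the unit character weight $\wt(\chi) = 1$ and observing that the induced fill weight $\fwt$ is exactly $\ind{\pc}$. Your write-up just makes explicit the two identifications ($\wt(\pc') = |\pc'|$ and $\fwt = \ind{\pc}$) that the paper leaves implicit.
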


We conclude this section by characterizing solutions to unique perfect phylogeny.
Let $G = (V,E)$ be an undirected graph and $S \subseteq V$.
We will use $G - S$ to denote the graph obtained from $G$ by removing the vertices $S$ and edges that are incident to a vertex in $S$.
If $x$, $y$ are connected vertices in $G$ but disconnected in $G - S$, then $S$ is an \emph{$xy-$separator}.
When no proper subset of $S$ is also an $xy-$separator, then $S$ is a \emph{minimal $xy-$separator}\footnote{Note that a minimal $xy-$separator $S$ is defined with respect to $x$ and $y$. That is, it may be that there is a different pair of vertices $u$,$v$ of $G$ such that $S$ is a non-minimal $uv-$separator.}.
If there is at least one pair of vertices $x$ and $y$ such that $S$ is a minimal $xy-$separator, then it is a \emph{minimal separator} of $G$.
The set of minimal separators of $G$ is denoted by $\ms{G}$.
Suppose $\Phi$ is a subset of $G$'s minimal separators.
The graph $G_\Phi$ is obtained from $G$ by adding the fill edge $uv$ whenever $uv \in \pf(S)$ for some $S$ in $\Phi$, and we say $G$ is obtained by \emph{saturating} each minimal separator in $\Phi$.
The following fundamental result characterizes the minimal triangulations of a graph in terms of its minimal separators.

\begin{theorem}\cite{PS95,PS97} see also \cite{KKS97}\label{thm_mtt}
Let $G$ a graph and $\ms{G}$ its minimal separators.
If $H$ is a minimal triangulation of $G$, then $\ms{H}$ is a maximal pairwise-parallel set of minimal separators of $G$ and $H = G_{\ms{H}}$.
Conversely, if $\Phi$ is any maximal pairwise-parallel set of minimal separators of $G$, then $G_\Phi$ is a minimal triangulation of $G$ and $\ms{G_\Phi} = \Phi$.
\end{theorem}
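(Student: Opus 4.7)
The plan is to prove the two implications separately, leaning on two structural facts about chordal graphs: every minimal separator of a chordal graph is a clique, and any two minimal separators of a chordal graph are parallel (i.e., one lies inside a single connected component of the other's removal).

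For the forward direction, fix a minimal triangulation $H$ of $G$. First I would show each $S \in \ms{H}$ is a minimal separator of $G$: $S$ separates some pair $x, y$ in $H$, hence in $G$ since $G$ is a subgraph of $H$; if some $S' \subsetneq S$ already $xy$-separated $G$, one argues that the fill edges of $H$ inside $S$ but outside $S'$ may be removed while preserving chordality, contradicting minimality of $H$. Next, to conclude $H = G_{\ms{H}}$, it suffices to show every fill edge $uv$ of $H$ lies in some $S \in \ms{H}$; this is the classical ``every fill edge of a minimal triangulation is needed by a minimal separator'' lemma, proved by noting that removing a fill edge not contained in any minimal separator of $H$ still leaves a chordal graph, contradicting minimality. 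Pairwise parallelism of $\ms{H}$ viewed as separators of $G$ transfers from parallelism in $H$ because removing edges only refines connected components, so a minimal separator of $H$ sitting in a single component of $H - S$ still sits in a connected region of $G - S$. Maximality follows because any additional minimal separator of $G$ parallel to every element of $\ms{H}$ would live inside a maximal clique of $H$, and the minimal separators of $H$ already account for all such nested separators via the clique-tree structure.

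For the converse, let $\Phi$ be a maximal pairwise-parallel family of minimal separators of $G$. The substantive task is showing $G_\Phi$ is chordal. I would proceed inductively: using parallelism, select some $S \in \Phi$ and a connected component $C$ of $G - S$ such that no other separator of $\Phi$ meets $C$ (existence of such a ``leaf'' separator is exactly what pairwise parallelism buys), saturate $S$, and recursively triangulate the block on $S \cup C$ using the pairwise-parallel family induced in that block. Minimality of $G_\Phi$ then follows because removing any fill edge would break some $S \in \Phi$ and permit a parallel refinement of $\Phi$, which its maximality forbids. Finally, $\ms{G_\Phi} = \Phi$ comes from the clique-tree picture for chordal graphs: in any chordal graph the minimal separators are exactly the intersections of adjacent bags of a clique tree, and the inductive construction aligns these bag-intersections with the separators in $\Phi$.

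The main obstacle is proving that $G_\Phi$ is chordal. The pairwise parallelism hypothesis is what makes the inductive decomposition work, but ruling out induced cycles of length at least four that might arise from the simultaneous saturation of several separators requires careful bookkeeping over the components of $G - S$ for each $S \in \Phi$; this is the technical core of the Parra--Scheffler argument, and is what distinguishes the converse from the easier forward direction, where parallelism follows almost immediately from the chordality of $H$.
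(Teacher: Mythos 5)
The paper does not prove this theorem; it quotes it from Parra and Scheffler (see also Kloks, Kratsch, and Spinrad), so there is no in-paper argument to compare against and your sketch has to stand on its own. It identifies the right ingredients, but one step of the forward direction rests on an argument that runs in the wrong direction. You derive pairwise parallelism of $\ms{H}$ as separators of $G$ from their parallelism in $H$ on the grounds that ``removing edges only refines connected components.'' Refinement is exactly the problem: if $T \setminus S$ lies in a single component $D$ of $H - S$ (which it does, since $T$ is a clique of the chordal graph $H$), then $D$ may split into several components of $G - S$, and $T \setminus S$ may be spread over more than one of them --- that is, $T$ can cross $S$ in $G$ while being parallel to it in $H$. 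What you actually need is the block-decomposition fact of Kloks, Kratsch, and Spinrad (the same fact underlying Lemma~\ref{lem_decompose} of this paper): for a minimal triangulation $H$ and $S \in \ms{H}$, no fill edge of $H$ joins two distinct components of $G - S$, so $H - S$ and $G - S$ have identical components. With that lemma the transfer is legitimate; without it the step as written fails.

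Two further places are thinner than the theorem demands. Your maximality argument asserts that a minimal separator of $G$ parallel to every member of $\ms{H}$ ``would live inside a maximal clique of $H$,'' which is neither evidently true nor the standard route; the usual argument shows instead that if $T \in \ms{G}$ does not separate $H$, then some fill edge of $H$ joins two components of $G - T$, that fill edge lies inside some $S \in \ms{H}$ (a clique of $H$), hence $S$ meets two components of $G - T$ and crosses $T$, and crossing of minimal separators is symmetric. Finally, in the converse you explicitly defer the chordality of $G_\Phi$ as ``the technical core''; since that is the substantive content of the Parra--Scheffler theorem, the proposal as written is an outline of their strategy rather than a proof.
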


An important observation from this theorem is that if $H$ is a minimal triangulation of $G$, then $\ms{H} \subseteq \ms{G}$.
Let $\pc$ be a set of characters and $\fwt$ be a fill weight on $\pig{\pc}$.
We will use $\minms{\fwt}{\pig{\pc}}$ to denote the set of minimal separators $S$ of $\pig{\pc}$ such that there is a $\fwt-$minimum minimal triangulation $H$ of $\pig{\pc}$ with $S \in \ms{H}$.

\begin{theorem}\label{thm_upp_reduction}
Suppose $\pc$ is a collection of characters on $X$.
Then $\pig{\pc}$ has a unique proper minimal triangulation if and only if 
\begin{enumerate}
	\item $\pig{\pc}$ has a $\ind{\pc}-$zero minimal triangulation; and
	\item $\minms{\ind{\pc}}{\pig{\pc}}$ is a maximal set of pairwise-parallel minimal separators of $\pig{\pc}$.
\end{enumerate}
\end{theorem}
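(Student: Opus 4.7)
The plan is to prove both directions by combining Observation \ref{obs_proper_zero} (which says proper triangulations are precisely the $\ind{\pc}$-zero triangulations) with Theorem \ref{thm_mtt} (which gives the bijection between minimal triangulations of $G$ and maximal pairwise-parallel subsets of $\ms{G}$). Note that by condition (1), whenever a $\ind{\pc}$-zero minimal triangulation exists, the $\ind{\pc}$-minimum minimal triangulations are exactly the $\ind{\pc}$-zero ones, so $\minms{\ind{\pc}}{\pig{\pc}}$ is the union of $\ms{H}$ over all proper minimal triangulations $H$.

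For the forward direction, I would assume that $\pig{\pc}$ has a unique proper minimal triangulation $H$. Condition (1) holds immediately by Observation \ref{obs_proper_zero}. For condition (2), uniqueness of $H$ forces $\minms{\ind{\pc}}{\pig{\pc}} = \ms{H}$, since every $\ind{\pc}$-zero minimal triangulation coincides with $H$. Theorem \ref{thm_mtt} then says $\ms{H}$ is maximal pairwise-parallel, establishing (2).

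For the reverse direction, I would assume both (1) and (2) and show any two proper minimal triangulations coincide. Let $H$ and $H'$ be proper minimal triangulations (condition (1) supplies at least one). Both are $\ind{\pc}$-zero, so $\ms{H} \subseteq \minms{\ind{\pc}}{\pig{\pc}}$ and $\ms{H'} \subseteq \minms{\ind{\pc}}{\pig{\pc}}$. By Theorem \ref{thm_mtt}, each of $\ms{H}$ and $\ms{H'}$ is itself a maximal pairwise-parallel set. Since $\minms{\ind{\pc}}{\pig{\pc}}$ is pairwise-parallel by assumption (2) and contains $\ms{H}$, maximality of $\ms{H}$ forces $\ms{H} = \minms{\ind{\pc}}{\pig{\pc}}$; the same argument gives $\ms{H'} = \minms{\ind{\pc}}{\pig{\pc}}$. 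Thus $\ms{H} = \ms{H'}$, and the second half of Theorem \ref{thm_mtt} gives $H = \pig{\pc}_{\ms{H}} = \pig{\pc}_{\ms{H'}} = H'$.

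The main subtlety to watch for is that condition (2) packages two assertions together: that $\minms{\ind{\pc}}{\pig{\pc}}$ is pairwise-parallel and that it is maximal among such. The pairwise-parallel half is the non-trivial one, and it is exactly what makes the containment $\ms{H} \subseteq \minms{\ind{\pc}}{\pig{\pc}}$ force equality via the maximality of $\ms{H}$ from Theorem \ref{thm_mtt}. Once that comparison is set up, the rest is bookkeeping, so I do not expect any serious obstacle beyond invoking Theorem \ref{thm_mtt} in both directions.
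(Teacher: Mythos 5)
Your proof is correct and follows essentially the same route as the paper's: both directions rest on Observation \ref{obs_proper_zero} combined with Theorem \ref{thm_mtt}, and your preliminary remark that condition (1) makes the $\ind{\pc}$-minimum minimal triangulations coincide with the $\ind{\pc}$-zero (hence proper) ones is exactly the point the paper uses implicitly when identifying $\minms{\ind{\pc}}{\pig{\pc}}$ with separators of proper minimal triangulations. The only difference is organizational: in the reverse direction the paper first constructs a candidate by saturating $\minms{\ind{\pc}}{\pig{\pc}}$ (which is where the maximality half of condition (2) is invoked) and then shows every proper minimal triangulation equals it, whereas you take existence from condition (1) and show any two proper minimal triangulations coincide using only the pairwise-parallel half of (2) against the maximality of $\ms{H}$ from Theorem \ref{thm_mtt} --- both arguments are valid.
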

\begin{proof}
Suppose $\pig{\pc}$ has a unique proper minimal triangulation $H^*$.
By Observation \ref{obs_proper_zero} it is a $\ind{\pc}-$zero minimal triangulation of $\pig{\pc}$, and each minimal separator of $H^*$ is a minimal separator of $\pig{\pc}$ by Theorem \ref{thm_mtt}, so $\ms{H^*} \subseteq \minms{\ind{\pc}}{\pig{\pc}}$.
Alternatively, if $S \in \minms{\ind{\pc}}{\pig{\pc}}$, then $S$ is a minimal separator of a $\ind{\pc}-$zero minimal triangulation of $\pig{\pc}$.
This minimal triangulation is proper by Observation \ref{obs_proper_zero}, so $S \in \ms{H^*}$ by uniqueness.
Therefore $\minms{\ind{\pc}}{\pig{\pc}} = \ms{H^*}$, and $\minms{\ind{\pc}}{\pig{\pc}}$ is a maximal pairwise-parallel set of minimal separators of $\pig{\pc}$ by Theorem \ref{thm_mtt}.

To prove the converse, suppose that $\pig{\pc}$ has a $\ind{\pc}-$zero minimal triangulation, and $\minms{\ind{\pc}}{\pig{\pc}}$ is a maximal set of pairwise-parallel minimal separators of $\pig{\pc}$.
By Theorem \ref{thm_mtt}, the graph $H$ obtained from $\pig{\pc}$ by saturating each minimal separator in $\minms{\ind{\pc}}{\pig{\pc}}$ is a minimal triangulation of $\pig{\pc}$, and further, for each fill edge $uv$ of $H$, there is a $S' \in \minms{\ind{\pc}}{\pig{\pc}}$ such that $u,v \in S'$.
By definition there is some $\ind{\pc}-$zero minimal triangulation that has $S'$ as a minimal separator.
This triangulation has $uv$ as a fill edge by Theorem \ref{thm_mtt} so $\ind{\pc}(uv) = 0$.
Therefore $H$ is an $\ind{\pc}-$zero minimal triangulation of $\pig{\pc}$, and by Observation \ref{obs_proper_zero}, $H$ is a proper minimal triangulation of $\pig{\pc}$.

Now let $H'$ be any proper minimal triangulation of $\pig{\pc}$.
By Observation \ref{obs_proper_zero}, $H'$ is an $\ind{\pc}-$zero minimal triangulation of $\pig{\pc}$, so $\ms{H'} \subseteq \minms{\ind{\pc}}{\pig{\pc}}$.
We assumed $\minms{\ind{\pc}}{\pig{\pc}}$ is pairwise-parallel, and $\ms{H'}$ is maximal with respect to being pairwise-parallel by Theorem \ref{thm_mtt}, so $\ms{H'} = \minms{\ind{\pc}}{\pig{\pc}}$.
Thus both $H$ and $H'$ are obtained from $\pig{\pc}$ by saturating each minimal separator of $\ms{H'} = \minms{\ind{\pc}}{\pig{\pc}}$, so $H' = H$.
Therefore $H$ is the unique proper minimal triangulation of $\pig{\pc}$.
\qed\end{proof}

\section{Finding weighted minimum triangulations}

In this section we show that, given a fill weight $\fwt$ for $G$, both $\wmfi(G)$ and $\minms{\fwt}{\pig{\pc}}$ can be computed in $\complexity{(r|\pc|)}$ time.
After that, we present proofs of our algorithmic results.

Given a graph $G$ and $X \subseteq V$, a set $C \subseteq V - X$ is a \emph{connected component} of $G - X$ if it is connected in $G - X$ and it is maximal with respect to this property.
A \emph{block} of a graph $G$ is a pair $(S,C)$ where $S \in \ms{G}$ and $C$ is a connected component of $G - S$, and it is \emph{full} or \emph{full with respect to $S$} if every vertex of $S$ has at least one neighboring vertex that is in $C$ (we write $N(C) = S$).
The \emph{realization} of a block $(S,C)$ is the graph $R(S,C)$ with vertex set $S \cup C$, and for any $u$ and $v$ in $S \cup C$, $uv$ is an edge of $R(S,C)$ if either $uv$ is an edge of $G$ or $uv \in \pf(S)$.

Kloks, Kratsch, and Spinrad \cite{KKS97} showed that the minimal triangulations of $G$ that have $S \in \ms{G}$ as a minimal separator (i.e.\ $S$ is saturated to obtain the minimal triangulation) can be obtained by independently minimally triangulating $R(S,C)$ for each connected component $C$ of $G - S$.
They used this fact to relate minimum fill to the realizations of the blocks of a minimal separator, an important first consideration for computing minimum fill using potential maximal cliques and minimal separators.
We extend this fact to weighted-minimum fill with the following lemma, whose proof follows with a slight modification of the proof of Theorem 3.4 in \cite{KKS97}, so we omit it.

\begin{lemma}\label{lem_decompose}
Let $G$ be a non-complete graph and $\fwt$ be a fill weight on $G$.
Then
\begin{equation*}
\wmfi(G) = \min_{S \in \ms{G}} (\wfill(S) + \sum_{C} \wmfi(R(S,C)))
\end{equation*}
where the sum occurs over the connected components $C$ of $G - S$ and
\begin{equation*}
\wfill(S) = \sum_{f \in \pf(S)} \fwt(f) \enspace .
\end{equation*}
\end{lemma}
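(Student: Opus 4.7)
The plan is to prove the two inequalities separately, adapting the Kloks--Kratsch--Spinrad decomposition with edge-count replaced by the fill weight $\fwt$. The structural input is the following correspondence, imported from Theorem 3.4 of \cite{KKS97}: whenever $S \in \ms{G}$, a minimal triangulation $H$ of $G$ with $S \in \ms{H}$ can be built by selecting, independently for each connected component $C$ of $G - S$, a minimal triangulation of the realization $R(S,C)$ and taking the union on $V(G)$; conversely, if $H$ is a minimal triangulation of $G$ and $S \in \ms{H}$, then for each component $C$ of $G - S$, the induced subgraph $H[S \cup C]$ is a minimal triangulation of $R(S,C)$. Non-negativity of $\fwt$ enters only once --- to guarantee that a $\fwt$-minimum triangulation of $G$ can be taken to be minimal --- after which all triangulations considered are minimal.

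For the ``$\geq$'' direction, fix a $\fwt$-minimum minimal triangulation $H$ of $G$. Because $G$ is non-complete, $H$ has a minimal separator $S$, and $S \in \ms{G}$ by Theorem \ref{thm_mtt}. Since $H$ is chordal, $S$ is a clique of $H$, so $\pf(S) \subseteq E(H) \setminus E(G)$ contributes exactly $\wfill(S)$ to $\fwt(H)$. The remaining fill edges lie inside individual blocks $H[S \cup C]$ and are precisely the fill edges of the minimal triangulation $H[S \cup C]$ of $R(S,C)$, with no overlap between distinct components (a fill edge with endpoints in two different components of $G - S$ would connect them without using a vertex of $S$, contradicting separation). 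Thus $\fwt(H) = \wfill(S) + \sum_C \fwt(H[S \cup C]) \geq \wfill(S) + \sum_C \wmfi(R(S,C))$, and $\wmfi(G)$ is bounded below by the right-hand side minimum.

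For the ``$\leq$'' direction, given any $S \in \ms{G}$ and, for each component $C$ of $G - S$, a $\fwt$-minimum minimal triangulation $H_C$ of $R(S,C)$, the converse of the KKS correspondence yields a minimal triangulation $H$ of $G$ whose fill edges partition as $\pf(S)$ together with the fill edges of each $H_C$ relative to $R(S,C)$. Crucially, $\pf(S)$ is already included in every $R(S,C)$ by definition of realization, so it is not re-counted by any $H_C$; therefore $\fwt(H) = \wfill(S) + \sum_C \wmfi(R(S,C))$, giving $\wmfi(G) \leq \wfill(S) + \sum_C \wmfi(R(S,C))$ for every $S \in \ms{G}$. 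Minimizing over $S$ supplies the matching upper bound. The main obstacle throughout is exactly this bookkeeping --- checking that the fill edges decompose cleanly across blocks without double-counting $\pf(S)$ or spilling between components --- and once that is verified, the argument is a line-by-line weighted replay of the unweighted proof in \cite{KKS97}.
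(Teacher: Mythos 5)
Your proof is correct and matches the paper's intended argument: the paper omits this proof entirely, noting only that it follows by a slight modification of Theorem 3.4 of \cite{KKS97}, and your two-inequality weighted replay of that decomposition is exactly that modification. The only soft spot is the parenthetical claim that no fill edge of $H$ joins two components of $G-S$ --- ``contradicting separation'' is not quite an argument on its own, since merging two components need not destroy $S$'s status as a minimal separator of $H$; the clean justification is via Theorem \ref{thm_mtt}, which puts every fill edge inside $\pf(S')$ for some $S'\in\ms{H}$ parallel to $S$ and hence inside a single block, but this fact is part of the Kloks--Kratsch--Spinrad correspondence you already import.
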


It turns out that non-full blocks with respect to $S \in \ms{G}$ are full blocks with respect to a different minimal separator of $G$.
They also allow us to compute $\wmfi(R(S,C))$, which is a useful fact for later when we restrict our attention to full blocks of $G$.

\begin{lemma}\cite{BT01}\label{lem_nonfull_blocks}
Let $G$ be a graph, $S \in \ms{G}$, and $C$ be a connected component of $G - S$.
If $N(C) = S' \subset S$, then $(S',C)$ is a full block of $G$ (i.e.\ $S' \in \ms{G}$).
Further, if $E' \subseteq \pf(C)$, then the graph obtained from $R(S,C)$ by adding the fill edges in $E'$ is a minimal triangulation of $R(S,C)$ if and only if the graph obtained from $R(S',C)$ by adding the fill edges in $E'$ is a minimal triangulation of $R(S,C)$.
\end{lemma}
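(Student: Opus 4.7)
The plan is to prove the two parts separately, with Part 1 a short structural argument about minimal separators in $G$ and Part 2 a clean simplicial-elimination argument.

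For Part 1, I would first observe that $C$ is also a connected component of $G - S'$: inside $G - S$, the set $C$ has no edges to other components, and by hypothesis $N(C) = S'$ means $C$ has no edges to $S \setminus S'$ either, so removing only $S'$ isolates $C$ just as removing $S$ did. To establish $S' \in \ms{G}$, I would pick a full component $C_1$ of $G - S$ with respect to $S$; such a $C_1$ exists because $S$ is a minimal $xy$-separator for some $x, y$, whose components in $G - S$ are both full. Each vertex of $S \setminus S'$ has a neighbor in $C_1$ by fullness, so in $G - S'$ the set $C_1 \cup (S \setminus S')$ lies in a single component $D$ that is disjoint from $C$. For any $x \in C$ and $y \in C_1 \subseteq D$, the set $S'$ is an $xy$-separator; and each $s \in S'$ has a neighbor in $C$ (because $S' = N(C)$) and a neighbor in $C_1$ (by fullness of $C_1$ for $S$), so $S'$ is a minimal $xy$-separator. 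Fullness of $(S', C)$ is then immediate from $N(C) = S'$.

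For Part 2, the structural observation is that $R(S,C)[S' \cup C] = R(S',C)$: edges within $S'$ agree because $S$ is a clique in $R(S,C)$ and $S' \subseteq S$, while the remaining edges are just the $G$-edges on $S' \cup C$ in both graphs. Since $E' \subseteq \pf(C)$ consists of non-edges inside $C$, adding $E'$ to either graph does not touch the vertices of $S$. Hence in $R(S,C) + E''$, for any $E'' \subseteq \pf(C)$, each vertex $v \in S \setminus S'$ has neighborhood $S \setminus \{v\}$, which is a clique. So $v$ is simplicial. I would then invoke the standard fact that if $v$ is simplicial in a graph $H$ then $H$ is chordal if and only if $H - v$ is chordal, and iterate over the vertices of $S \setminus S'$ to conclude that $R(S,C) + E''$ is chordal if and only if $R(S',C) + E''$ is chordal, for every $E'' \subseteq \pf(C)$.

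The minimal-triangulation equivalence then follows in both directions. If $R(S,C) + E'$ is a minimal triangulation of $R(S,C)$, the chordality equivalence gives that $R(S',C) + E'$ is chordal, and any $E'' \subsetneq E'$ with $R(S',C) + E''$ chordal would also make $R(S,C) + E''$ chordal, contradicting minimality of $E'$ for $R(S,C)$; the converse is symmetric. The main point of care is Part 1, specifically verifying that every $s \in S'$ really does have neighbors on both sides of $S'$ after removal — bringing in a full component $C_1$ of $G - S$ is what makes this step go through. Part 2 itself reduces cleanly to the textbook fact about simplicial vertices and chordality.
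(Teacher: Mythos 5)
The paper does not prove this lemma at all --- it is imported verbatim from Bouchitt\'e and Todinca \cite{BT01} --- so there is no in-paper argument to compare against. Your proof is correct and is essentially the standard one: Part~1 is the usual argument that a non-full block $(S,C)$ becomes full over $S'=N(C)$, certified by pairing $C$ against a full component $C_1$ of $G-S$ (which exists since $S$ is a minimal $xy$-separator for some $x,y$, and which is necessarily distinct from $C$ since $C$ is not full); Part~2 correctly reduces to the observation that every vertex of $S\setminus S'$ is simplicial in $R(S,C)+E''$ for any $E''\subseteq\pf(C)$, so chordality is preserved under deleting $S\setminus S'$, and the minimality equivalence follows since the chordality equivalence holds uniformly over all subsets of $E'$. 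The only caveats are degenerate ones the source also ignores (e.g.\ $S'=\emptyset$ when $G$ is disconnected), and you implicitly and reasonably read the lemma's final ``$R(S,C)$'' as the evident typo for ``$R(S',C)$''.
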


This gives us the following, an extension of Corollary 4.5 in \cite{BT01}.

\begin{corollary}\label{cor_nonfull_blocks}
Let $G$ be a graph, $S \in \ms{G}$, and $C$ be a connected component of $G - S$.
If $N(C) = S' \subset S$, then $\wmfi(R(S,C)) = \wmfi(R(S',C))$ for any fill weight $\fwt$.
\end{corollary}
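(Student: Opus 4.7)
\emph{Proposal.} My plan is to exhibit a weight-preserving bijection between the minimal triangulations of $R(S,C)$ and those of $R(S',C)$, so that taking the minimum on each side yields the desired equality. The crux of the bijection is the observation that, because $N(C) = S'$, every vertex $s \in S \setminus S'$ has no $G$-neighbor in $C$; since $R(S,C)$ saturates $S$ into a clique, the $R(S,C)$-neighborhood of $s$ is exactly $S \setminus \{s\}$, which is a clique. Thus every $s \in S \setminus S'$ is simplicial in $R(S,C)$.

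Using the standard minimality argument (if some fill edge were incident to a simplicial vertex $s$, deleting all such fill edges would leave $s$ with a clique neighborhood and so no chordless cycle in the reduced graph could pass through $s$, contradicting the minimality of $H$), I conclude that every minimal triangulation $H$ of $R(S,C)$ has all its fill edges contained in $S' \cup C$. Since $R(S,C)[S'\cup C] = R(S',C)$, the restriction $H[S' \cup C]$ is a triangulation of $R(S',C)$ with exactly the same fill-edge set as $H$. Its minimality is inherited from $H$: if deleting a fill edge $uv$ from $H[S' \cup C]$ left it chordal, then the resulting graph in $R(S,C)$ would also be chordal (since the reattached $S \setminus S'$ vertices are still simplicial and cannot lie on a chordless long cycle), contradicting the minimality of $H$. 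Conversely, given a minimal triangulation $H'$ of $R(S',C)$, reattaching the vertices of $S \setminus S'$ together with their $R(S,C)$-edges produces a graph in which those vertices remain simplicial; applying Lemma \ref{lem_nonfull_blocks} (and the same simpliciality reasoning for fill edges between $S'$ and $C$) shows that this extension is a minimal triangulation of $R(S,C)$ whose fill edges coincide with those of $H'$. These two operations are mutual inverses, so we obtain a bijection $H \leftrightarrow H[S' \cup C]$ that preserves fill-edge sets.

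Because the bijection preserves fill-edge sets, $\fwt(H) = \fwt(H')$ for each matched pair, and taking the minimum over each side gives $\wmfi(R(S,C)) = \wmfi(R(S',C))$; the non-negativity of $\fwt$ (noted earlier) ensures the minimum triangulation can be chosen minimal on both sides. The main obstacle is the simpliciality step — carefully verifying that no fill edge of a minimal triangulation of $R(S,C)$ can be incident to $S \setminus S'$, and that minimality transfers in both directions of the bijection — but this follows from the standard characterization of minimal triangulations as edge-minimal chordal supergraphs and matches the argument style used in \cite{BT01} for the full-block case.
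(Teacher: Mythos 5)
Your proof is correct and takes essentially the same route as the paper, which deduces the corollary immediately from the fill-edge-preserving correspondence between minimal triangulations of $R(S,C)$ and $R(S',C)$ given by Lemma \ref{lem_nonfull_blocks}; you simply re-derive that correspondence from scratch via the simpliciality of the vertices of $S \setminus S'$ in $R(S,C)$. Since matched triangulations have identical fill-edge sets, their $\fwt$-weights agree and the two minima coincide, exactly as intended.
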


In order to compute $\wmfi(R(S,C))$, we need the notion of a potential maximal clique.
Let $G$ be a graph and $K$ be a subset of its vertices.
Then $K$ is a \emph{potential maximal clique} of $G$ if there is a minimal triangulation $H$ of $G$ and $K$ is a maximal clique of $H$.
That is, every pair of vertices in $K$ are adjacent in $H$, and no proper superset of $K$ has this property.
The set of potential maximal cliques of $G$ is denoted by $\pmc{G}$.
The next two lemmas describe the interplay between potential maximal cliques, minimal separators, and blocks.
\begin{lemma}\cite{BT01}
Let $G$ be a graph and $K$ be a potential maximal clique of $G$.
Then $S \in \ms{G}$ and $S \subseteq K$ if and only if $N(C) = S$ for some connected component $C$ of $G - K$.
\end{lemma}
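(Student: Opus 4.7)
The plan is to use the Bouchitt\'e--Todinca combinatorial characterization of potential maximal cliques from \cite{BT01}: $K$ is a PMC of $G$ if and only if (a) no connected component $C$ of $G-K$ has $N(C)=K$ (so $N(C)\subsetneq K$), and (b) every pair $u,v\in K$ with $uv\notin E(G)$ satisfies $\{u,v\}\subseteq N(C')$ for some component $C'$ of $G-K$. Both directions of the lemma then reduce to short case analyses about where $K-S$ sits among the components of $G-S$.

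For the $(\Leftarrow)$ direction, suppose $C$ is a component of $G-K$ with $S=N(C)$. Then $S\subseteq K$ is immediate, and $C$ itself is one full component of $G-S$. By (a), $K-S\neq\emptyset$; pick $v\in K-S$ and let $D$ be its component in $G-S$. For each $u\in S$, property (b) gives either $uv\in E(G)$ or a component $C'$ of $G-K$ with $\{u,v\}\subseteq N(C')$; since $C'\cap K=\emptyset$ and $S\subseteq K$, the detour $u\to C'\to v$ lies entirely in $G-S$, so $u\in N(D)$. Hence $N(D)=S$, and $D\neq C$ because $v\in D\cap K$ while $C\cap K=\emptyset$. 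Two full components exist, so $S\in\ms{G}$.

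For the $(\Rightarrow)$ direction, let $S\in\ms{G}$ with $S\subseteq K$, and let $D_1,D_2$ be two full components of $G-S$. I claim some $D_i\cap K=\emptyset$. Otherwise, pick $u_i\in D_i\cap(K-S)$; since $u_1,u_2$ lie in distinct components of $G-S$ they are nonadjacent in $G$, so (b) yields a component $C'$ of $G-K$ with $\{u_1,u_2\}\subseteq N(C')$, and the path $u_1\to C'\to u_2$ sits in $G-S$, contradicting $D_1\neq D_2$. WLOG $D_1\subseteq V-K$; then $D_1$ is connected in $G-K$ (same edges as in $G-S$ among $V-K$) and maximal (any adjacency in $G-K$ extends to $G-S$), so $D_1$ is a component of $G-K$ with $N(D_1)=S$.

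The main obstacle is that the excerpt defines a PMC only as a maximal clique of some minimal triangulation $H$, so a self-contained proof must either re-derive BT's characterization or work through $H$: for chordal $H$ with maximal clique $K$, the minimal separators of $H$ contained in $K$ correspond bijectively to components of $H-K$ via $C\mapsto N_H(C)$ (clique-tree structure), and identifying these with $\ms{G}\cap 2^K$ and with components of $G-K$ uses Theorem \ref{thm_mtt} together with the fact that $H-S$ and $G-S$ share connected components whenever $S\in\ms{H}$.
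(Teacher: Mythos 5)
The paper does not prove this lemma at all---it is quoted verbatim from Bouchitt\'e and Todinca \cite{BT01}---so there is no in-paper argument to compare against; judged on its own, your proof is correct and is essentially the derivation found in that source. Both of your case analyses go through: in the $(\Leftarrow)$ direction, $C$ itself is a full component of $G-S$, and the component $D$ of $G-S$ containing a vertex $v\in K-S$ (nonempty since otherwise $N(C)=K$) is a second full component because every non-edge $uv$ with $u\in S$ is covered by some $N(C')$ with $C'$ disjoint from $K\supseteq S$; in the $(\Rightarrow)$ direction, the same covering property correctly forces one of the two full components of $G-S$ to avoid $K$, and such a component is then verbatim a component of $G-K$ with neighborhood $S$. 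The one substantive dependency is the characterization you start from---no component of $G-K$ sees all of $K$, and every non-edge inside $K$ lies in $N(C')$ for some component $C'$---which is the main structural theorem of \cite{BT01} and does not follow quickly from the paper's definition of a potential maximal clique as a maximal clique of a minimal triangulation; you flag this honestly, and your sketched fallback via clique trees of $H$ and Theorem \ref{thm_mtt} is the right way to close that gap, but that is where the bulk of the work would actually sit in a self-contained proof. (You also implicitly use the standard fact that $S\in\ms{G}$ iff $G-S$ has at least two full components, which is fine, and like the lemma itself your argument tacitly assumes $G$ is connected so that $N(C)\neq\emptyset$.)
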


Therefore if $K \in \pmc{G}$ and $C_1, C_2, \ldots, C_k$ are the connected components of $G - K$, each $(S_i,C_i)$ where $N(C_i) = S_i$ is a full block of $G$ (i.e.\ $S_i \in \ms{G}$).
These blocks are called the blocks \emph{associated} to $K$.

\begin{lemma}\cite{BT01}\label{lem_mt_of_blocks}
Suppose $G$ is a graph, $S \in \ms{G}$, and $(S,C)$ is a full block.
Then $H(S,C)$ is a minimal triangulation of $R(S,C)$ if and only if 
\begin{enumerate}
	\item there is a potential maximal clique $K$ of $G$ such that $S \subset K \subseteq (S,C)$; and
	\item letting $(S_i,C_i)$ for $1 \leq i \leq p$ be the blocks associated to $K$ such that $S_i \cup C_i \subset S \cup C$, we have $E(H) = \bigcup_{i=1}^p E(H_i) \cup \pf(K)$ where $H_i$ is a minimal triangulation of $R(S_i,C_i)$ for each $1 \leq i \leq p$.
\end{enumerate}
\end{lemma}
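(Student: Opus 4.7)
The plan is to prove both directions by combining the Kloks--Kratsch--Spinrad style decomposition of minimal triangulations along a saturated minimal separator with the definition of a potential maximal clique as a maximal clique of \emph{some} minimal triangulation.

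For the forward direction, I would start from a minimal triangulation $H = H(S,C)$ of $R(S,C)$. Since $(S,C)$ is a full block, $S$ is already a clique in $R(S,C)$ and hence in $H$, so I can pick a maximal clique $K$ of $H$ that contains $S$; a clique-tree argument using fullness of $(S,C)$ and $C \neq \emptyset$ forces $K \supsetneq S$. To see that $K$ is a potential maximal clique of $G$ (and not merely of $R(S,C)$), I would extend $H$ to a minimal triangulation $\hat{H}$ of $G$ by independently and minimally triangulating $R(S,C')$ for every other connected component $C'$ of $G - S$; the decomposition recalled preceding Lemma \ref{lem_decompose} guarantees this gluing yields a minimal triangulation of $G$, and $K$ remains a maximal clique of $\hat{H}$.

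Next, I would identify the blocks $(S_i, C_i)$ of $G$ associated to $K$ with $S_i \cup C_i \subset S \cup C$ as exactly the blocks obtained from the connected components of $R(S,C) - K$, invoking Lemma \ref{lem_nonfull_blocks} to pass from the $R(S,C)$-neighborhood of a component to its $G$-neighborhood whenever the two differ. Restricting $H$ to each $S_i \cup C_i$ gives $H_i$, which is a minimal triangulation of $R(S_i, C_i)$ by applying the saturated-separator decomposition once more at $S_i$ inside $H$. The edge decomposition of $E(H)$ then falls out by partitioning edges into those with both endpoints in $K$ (contributing exactly $\pf(K)$ beyond the edges already in $R(S,C)$) and those with both endpoints inside some $R(S_i, C_i)$.

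For the converse, I would define $H$ by the prescribed edge union and verify chordality by a case analysis on how a cycle of length $\geq 4$ meets $K$: either the cycle enters $K$ at two non-consecutive vertices (so the cliqueness of $K$ supplies a chord), or it lives inside some $R(S_i,C_i) \cup K$ and chordality of $H_i$ together with cliqueness of $K$ supplies one. For minimality, I would apply Theorem \ref{thm_mtt}: the minimal separators of $H$ assemble from the families $\ms{H_i}$, each maximal pairwise-parallel in $R(S_i,C_i)$ by minimality of $H_i$, and the global maximality of the assembled family in $R(S,C)$ follows from maximality locally plus the fact that the sub-blocks are glued along subsets of the clique $K$. The main obstacle will be exactly this minimality half of the backward direction: one must precisely control $\ms{H}$, rule out phantom new minimal separators created by saturating $K$, and then confirm maximality of the pairwise-parallel family via Theorem \ref{thm_mtt}. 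A secondary subtlety, in the forward direction, is correctly handling components of $R(S,C) - K$ whose $R(S,C)$-neighborhood strictly contains their $G$-neighborhood --- exactly the situation Lemma \ref{lem_nonfull_blocks} was designed to resolve.
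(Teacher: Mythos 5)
The paper never proves this lemma: it is imported from \cite{BT01} as a black box, so there is no in-paper argument to compare against. Your sketch is a reconstruction of Bouchitt\'e and Todinca's original proof, and the forward direction is essentially theirs and sound: $S$ cannot itself be a maximal clique of $H(S,C)$ (for any clique-tree neighbour of $S$ with intersection $S' \subset S$, fullness of $(S,C)$ and connectivity of $C$ make $C \cup (S - S')$ connected in $H - S'$, a contradiction), gluing minimal triangulations of the remaining blocks of $S$ onto $H$ via the Kloks--Kratsch--Spinrad decomposition yields a minimal triangulation of $G$ in which $K$ stays maximal, which certifies $K \in \pmc{G}$ by definition, and Lemma \ref{lem_nonfull_blocks} absorbs the neighbourhood discrepancies.

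The genuine gap is in the converse: as written, your argument never uses the hypothesis that $K$ is a potential maximal clique of $G$, only that $S \subset K \subseteq S \cup C$ and $K$ is made a clique. For an arbitrary such $K$ the construction still produces a triangulation (your chordality case analysis goes through), but not a minimal one --- take $K = S \cup C$ for an extreme failure. So minimality must be extracted from the p.m.c.\ hypothesis, and the place it enters is not the pairwise-parallelism bookkeeping you describe but the other half of Theorem \ref{thm_mtt}: one must show $H = R(S,C)_{\ms{H}}$, i.e.\ that every fill edge lies inside a saturated minimal separator of $H$. A fill edge $uv \in \pf(K)$ with $u \in C$ lies in a minimal separator of $H$ only if $u,v \in S_i$ for some block $(S_i,C_i)$ associated to $K$, because every maximal clique of $H$ other than $K$ is contained in some $S_i \cup C_i$, which meets $K$ exactly in $S_i$. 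The existence of such a block for every non-adjacent pair of $K$ is precisely the Bouchitt\'e--Todinca combinatorial characterization of potential maximal cliques ($G - K$ has no full component, and $K$ is ``cliquish'': every non-edge of $G$ inside $K$ is covered by $N(D)$ for some component $D$ of $G-K$, which for pairs meeting $C$ forces $D \subseteq C$). That characterization, or an equivalent detour through a minimal triangulation of $G$ witnessing $K \in \pmc{G}$, is the missing ingredient; with it, $\pf(K)$ is covered by the $\pf(S_i)$, the assembled separator family is maximal, and your plan closes.
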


The following lemma is an extension of Corollary 4.8 in \cite{BT01}.
For completeness, we provide a proof.

\begin{lemma}
Let (S,C) be a full block of $G$ and $\fwt$ be a fill weight on $G$.
Then
\begin{equation}\label{eqn}
\wmfi(R(S,C)) = \min_{S \subset K \subseteq (S,C)} (\wfill(K) - \wfill(S) + \sum \wmfi(R(S_i,C_i)))
\end{equation}
where the minimum is taken over all $K \in \pmc{G}$ such that $S \subset K \subseteq (S,C)$, and $(S_i,C_i)$ are the blocks associated to $K$ in $G$ such that $S_i \cup C_i \subset S \cup C$.
\end{lemma}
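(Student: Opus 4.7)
The plan is to combine Lemma \ref{lem_mt_of_blocks} with a weight-decomposition argument, mirroring the unweighted version (Corollary 4.8 of \cite{BT01}) but tracking $\fwt$-values rather than counting fill edges. First, I would use Lemma \ref{lem_mt_of_blocks} to parametrize the minimal triangulations of $R(S,C)$: every such $H$ arises from a choice of potential maximal clique $K \in \pmc{G}$ with $S \subset K \subseteq S \cup C$ together with a minimal triangulation $H_i$ of each $R(S_i,C_i)$ for the associated blocks strictly contained in $(S,C)$, via $E(H) = \bigcup_{i=1}^p E(H_i) \cup \pf(K)$. Conversely, any such choice produces a minimal triangulation of $R(S,C)$.

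Second, I would prove the weight identity
\begin{equation*}
\fwt(H) \;=\; \bigl(\wfill(K) - \wfill(S)\bigr) + \sum_{i=1}^p \fwt(H_i),
\end{equation*}
where $\fwt(H)$ sums over fill edges of $H$ as a triangulation of $R(S,C)$, and $\fwt(H_i)$ over fill edges of $H_i$ as a triangulation of $R(S_i,C_i)$. The fill edges of $H$ are the pairs in $E(H) \setminus (E(G)|_{S \cup C} \cup \pf(S))$. I would split them into (i) the set $F := \pf(K) \setminus \pf(S)$, whose total $\fwt$-weight is $\wfill(K) - \wfill(S)$ (this uses $S \subseteq K$, so $\pf(S) \subseteq \pf(K)$), and (ii) for each $i$, the fill edges of $H_i$ over $R(S_i,C_i)$.

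The main obstacle, and the technical heart of the proof, is verifying that this partition genuinely has no double counting. The danger is pairs $uv$ with $u,v \in S_i \subseteq K$ and $uv \notin E(G)$: these lie in $\pf(K)$ but also appear as edges of $H_i$. The resolution is that any such pair is in $\pf(S_i)$, hence already an edge of the realization $R(S_i,C_i)$, so it is \emph{not} a fill edge of $H_i$ and contributes nothing to $\fwt(H_i)$. Every remaining edge of $H_i$ is incident to some vertex of $C_i$, and since $C_i \cap K = \emptyset$ (because $C_i$ is a component of $G - K$), such edges cannot lie in $\pf(K)$. Symmetrically, each element of $F$ lies inside $K$ and hence avoids every $C_i$, so it is not an edge of any $H_i$. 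This yields the required disjointness.

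Finally, I would take the minimum over all minimal triangulations. By the parametrization, this minimum decouples as
\begin{equation*}
\min_H \fwt(H) \;=\; \min_K \Bigl[\wfill(K) - \wfill(S) + \sum_{i=1}^p \min_{H_i} \fwt(H_i)\Bigr],
\end{equation*}
and because $\fwt$ is non-negative each inner minimum equals $\wmfi(R(S_i,C_i))$ (as noted in Section 2, some minimal triangulation always attains the $\fwt$-minimum). Since a $\fwt$-minimum triangulation of $R(S,C)$ can likewise be taken to be minimal, the left-hand side equals $\wmfi(R(S,C))$, giving the stated formula (\ref{eqn}).
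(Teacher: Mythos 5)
Your proof is correct and follows essentially the same route as the paper: both rest on Lemma \ref{lem_mt_of_blocks} to parametrize the minimal triangulations of $R(S,C)$ by a potential maximal clique $K$ together with independent minimal triangulations of the associated sub-blocks, establish the weight identity $\fwt(H) = \wfill(K) - \wfill(S) + \sum_i \fwt(H_i)$, and then optimize. The only differences are presentational — you argue the minimum decouples directly where the paper proves the two inequalities separately via an exchange argument, and your explicit verification that the fill-edge partition has no double counting (pairs inside $S_i$ being already saturated in $R(S_i,C_i)$, and $C_i \cap K = \emptyset$) spells out what the paper compresses into the phrase ``disjointly obtained.''
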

\begin{proof}
Let $H(S,C)$ be a triangulation of $R(S,C)$ such that $\wmfi(R(S,C)) = \fwt(H(S,C))$.
Without loss of generality, we may assume $H(S,C)$ is a minimal triangulation of $R(S,C)$ because $\fwt$ is non-negative.
By Lemma \ref{lem_mt_of_blocks}, there is a potential maximal clique $K$ such that $S \subset K \subseteq (S,C)$ with blocks $(S_i,C_i)$ associated to $K$ such that $S_i \cup C_i \subset S \cup C$ for $1 \leq i \leq p$.
Further, the fill edges of $H(S,C)$ are disjointly obtained from the fill edges of $H_i$ for $1 \leq i \leq p$ and $\pf(K) - \pf(S)$ (because $S$ is already saturated in $R(S,C)$).
Therefore $\wmfi(R(S,C)) = \fwt(H(S,C)) = \wfill(K) - \wfill(S) + \sum_{i=1}^p \fwt(H_i)$.

Now, for a given $1 \leq k \leq p$, suppose for the sake of contradiction that $H_k$ is not a $\fwt-$minimum fill of $R(S_k,C_k)$.
Then there is a minimal triangulation $H'_k$ of $R(S_k,C_k)$ such that $\fwt(H'_k) < \fwt(H_k)$.
Further, the graph $H'(S,C)$ with vertex set $(S,C)$ and edge set $E(H(S,C)) - E(H_k) \cup E(H'_k)$ is a minimal triangulation of $R(S,C)$ by Lemma \ref{lem_mt_of_blocks}, and it has a weighted fill of $\fwt(H'(S,C)) = \fwt(H(S,C)) - \fwt(H_k) + \fwt(H'_k) < \fwt(H(S,C))$.
This contradicts the $\fwt-$minimality of $H(S,C)$, so it must be that $\fwt(H_k) = \wmfi(R(S_k,C_k))$, and therefore $\wmfi(R(S,C)) = \wfill(K) - \wfill(S) + \sum_{i=1}^p \wmfi(R(S_i,C_i))$.
Letting LHS and RHS denote the left-hand side and right-hand side of equation (\ref{eqn}), respectively, we have shown that LHS $\geq$ RHS.

Now suppose $K^* \in \pmc{G}$ such that $S \subset K \subseteq (S,C)$ and 
\begin{equation*}
	\wfill(K^*) - \wfill(S) + \sum \wmfi(R(S^*_i,C^*_i)) = \mbox{RHS}
\end{equation*}
where $(S^*_i,C^*_i)$ are the blocks associated to $K^*$ in $R(S,C)$ for $1 \leq i \leq p^*$.
For $1 \leq i \leq p^*$, let $H^*_i$ be a minimal triangulation of $R(S^*_i,C^*_i)$ such that $\fwt(H^*_i) = \wmfi(R(S^*_i,C^*_i))$.
By Lemma \ref{lem_mt_of_blocks}, there is a minimal triangulation $H^*(S,C)$ of $R(S,C)$ obtained by adding the fill edges $\pf(K^*) - \pf(S)$ and $E(H^*_i) - E(R(S^*_i,C^*_i))$ for $1 \leq i \leq p^*$, and hence $\fwt(H^*(S,C)) = \wfill(K^*) - \wfill(S) + \sum \wmfi(R(S^*_i,C^*_i))$.
Now, $\wmfi(R(S,C)) \leq \fwt(H^*(S,C))$ by definition, so LHS $\leq$ RHS and therefore LHS $=$ RHS.
\qed\end{proof}

\begin{theorem}\label{thm_algorithm}
Let $\pc$ be a set of partial characters on $X$ with at most $r$ parts per character, and $\fwt$ be a fill weight on $\pig{\pc}$.
There is an $\complexity{(r|\pc|)}$ algorithm that computes $\wmfi(\pig{\pc})$ and $\minms{\fwt}{\pig{\pc}}$.
\end{theorem}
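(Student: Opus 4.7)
The plan is to follow the Bouchitté--Todinca dynamic programming paradigm, lifted to the weighted setting using Lemma \ref{lem_decompose} and the recursion in equation (\ref{eqn}) established above.

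\textbf{Construction.} First I would build $\pig{\pc}$: for each $x \in X$, collect the vertices $(A,\chi)$ with $x \in A$ and form a clique among them. Since each character contributes at most one such vertex per $x$, this produces $\pig{\pc}$ on $n \le r|\pc|$ vertices in $O(|X||\pc|^2)$ time, which accounts for the first term of $\complexity{(r|\pc|)}$. Next I enumerate $\ms{\pig{\pc}}$ in polynomial delay per separator (e.g.\ via Berry--Bordat--Cogis), and then enumerate $\pmc{\pig{\pc}}$ using the Bouchitt\'e--Todinca procedure; as is known, $|\pmc{\pig{\pc}}| = O(n|\ms{\pig{\pc}}|^2)$, and each PMC can be produced in $O(n)$ amortized time given $\ms{\pig{\pc}}$.

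\textbf{Dynamic programming.} I would precompute $\wfill(S)$ for every $S \in \ms{\pig{\pc}}$ and $\wfill(K)$ for every $K \in \pmc{\pig{\pc}}$ in $O(n^2)$ per set, as well as, for each $S \in \ms{\pig{\pc}}$, the connected components of $\pig{\pc} - S$ and their neighborhoods (so that non-full blocks can be redirected to full blocks via Corollary \ref{cor_nonfull_blocks}). Then I would compute $\wmfi(R(S,C))$ for every full block $(S,C)$, processing blocks in order of increasing $|C|$. For each such block, evaluate equation (\ref{eqn}) by iterating over those $K \in \pmc{\pig{\pc}}$ with $S \subsetneq K \subseteq S \cup C$; the values $\wmfi(R(S_i,C_i))$ for the strictly smaller full blocks $(S_i,C_i)$ associated to $K$ are already in the table. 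Finally, apply Lemma \ref{lem_decompose} to obtain $\wmfi(\pig{\pc}) = \min_{S} \bigl(\wfill(S) + \sum_C \wmfi(R(S,C))\bigr)$.

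\textbf{Recovering $\minms{\fwt}{\pig{\pc}}$.} During the DP I would store, for each block $(S,C)$ and each $S' \in \ms{\pig{\pc}}$ occurring in the recursion, a pointer indicating whether $S'$ participates in some minimizing choice. After the top-level minimization finishes, I back-trace: mark the top-level minimizers $S$ in Lemma \ref{lem_decompose} as ``optimal,'' and recursively mark any PMC $K$ that attains the minimum in (\ref{eqn}) for an optimal block, together with every full block associated to such a $K$. By Theorem \ref{thm_mtt} and Lemma \ref{lem_mt_of_blocks}, a minimal separator $S$ of $\pig{\pc}$ lies in $\ms{H}$ of some $\fwt$-minimum minimal triangulation $H$ if and only if it is marked optimal during this back-trace, so the marked set is exactly $\minms{\fwt}{\pig{\pc}}$.

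\textbf{Complexity and obstacle.} Each PMC contributes $O(n^3)$ work (computing $\wfill(K)$ and summing over its $O(n)$ associated blocks, with $O(n^2)$ bookkeeping), giving a total DP cost of $O(n^3|\pmc{\pig{\pc}}|) = O(n^4|\ms{\pig{\pc}}|^2)$; enumeration of separators and PMCs fits within the same bound, and the back-trace is absorbed by the DP cost. With $n \le r|\pc|$ this matches the second term of $\complexity{(r|\pc|)}$. The main obstacle I anticipate is showing correctness of the back-trace step for $\minms{\fwt}{\pig{\pc}}$: one must argue that every optimal choice of separator really does extend to a full $\fwt$-minimum minimal triangulation (so that no spurious separators are marked) and, conversely, that every separator in some $\fwt$-minimum minimal triangulation actually arises in a minimizing branch somewhere in the DP. Both directions reduce to a careful application of Lemma \ref{lem_mt_of_blocks} together with the structural correspondence in Theorem \ref{thm_mtt}.
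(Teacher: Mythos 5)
Your construction of $\pig{\pc}$, the enumeration of $\ms{\pig{\pc}}$ and $\pmc{\pig{\pc}}$, the dynamic program over full blocks via equation (\ref{eqn}), the top-level application of Lemma \ref{lem_decompose}, and the complexity accounting all match the paper's Algorithm \ref{alg_this_is_the_only_algorithm_in_the_paper} essentially step for step. The one place where you diverge is the computation of $\minms{\fwt}{\pig{\pc}}$, and that is also where you leave a genuine gap: you propose a back-trace through the DP tree and explicitly defer its correctness (``the main obstacle I anticipate is showing correctness of the back-trace step''). As written, the proof is incomplete, because membership in $\minms{\fwt}{\pig{\pc}}$ is an existential statement over \emph{all} $\fwt$-minimum minimal triangulations, and you have not argued that every such triangulation is reachable by some sequence of minimizing choices in your back-trace, nor that combining independently optimal sub-choices across sibling blocks always assembles into a single globally optimal triangulation whose minimal separator set contains exactly the marked sets.

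The paper avoids this entirely with a much shorter argument that you should adopt. For each $S \in \ms{\pig{\pc}}$, the quantity $\wmfi(S) = \wfill(S) + \sum_i \wmfi(R(S_i,C_i))$ computed in the second loop is, by the Kloks--Kratsch--Spinrad decomposition underlying Lemma \ref{lem_decompose}, exactly the minimum of $\fwt(H)$ over all minimal triangulations $H$ of $\pig{\pc}$ with $S \in \ms{H}$. Hence $S \in \minms{\fwt}{\pig{\pc}}$ if and only if $\wmfi(S) = \wmfi(\pig{\pc})$: one direction is immediate, and for the other, if $H$ is any $\fwt$-minimum minimal triangulation with $S \in \ms{H}$, then $\wmfi(S) \leq \fwt(H) = \wmfi(\pig{\pc}) \leq \wmfi(S)$. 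Since your top-level loop already ranges over every $S \in \ms{\pig{\pc}}$ and records $\wmfi(S)$, the set $\minms{\fwt}{\pig{\pc}}$ is read off as $\{S : \wmfi(S) = \wmfi(\pig{\pc})\}$ in $O(|\ms{\pig{\pc}}|)$ additional time; no marking of potential maximal cliques or recursion into sub-blocks is needed. Replacing your back-trace with this test closes the gap and keeps the running time within $\complexity{(r|\pc|)}$.
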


\begin{algorithm}\caption{}\label{alg_this_is_the_only_algorithm_in_the_paper}
\begin{algorithmic}[1]
	\STATE \textbf{Input:} Partial characters $\pc$ on $X$ with at most $r$ parts
	\STATE \textbf{Output:} $\wmfi(\pig{\pc})$ and $\minms{\fwt}{\pig{\pc}}$
	
	\STATE compute $\pig{\pc}$
	\STATE compute $\ms{\pig{\pc}}$ and $\pmc{\pig{\pc}}$
	
	\COMMENT{ Find the $\fwt-$minimum fill value for each full block }
	\STATE compute all the full blocks $(S,C)$ and sort them by the number of vertices
	
	\FOR{ each full block $(S,C)$ taken in increasing order }
		\STATE $\wmfi(R(S,C)) \gets \wfill(S \cup C)$ if $(S,C)$ is inclusion-minimal \\
			and $\wmfi(R(S,C)) \gets \infty$ otherwise
		\FOR{ each potential maximal clique $K$ s.t.\ $S \subset K \subseteq S \cup C$ }
			\STATE compute the blocks $(S_i,C_i)$ associated to $K$ s.t.\ $S_i \cup C_i \subset S \cup C$
			\STATE $\mbox{newfill} \gets \wfill(K) - \wfill(S) + \sum_{i} \wmfi(R(S_i,C_i))$
			\STATE $\wmfi(R(S,C)) \gets \min( \wmfi(R(S,C)), \mbox{newfill})$
		\ENDFOR
	\ENDFOR
	
	\STATE $\wmfi(\pig{\pc}) \gets \infty$
	
	\COMMENT{ Find the $\fwt-$minimum fill value for minimal triangulations containing $S$ }
	\FOR{ each minimal separator $S$ of $\pig{\pc}$ }
		\STATE compute the blocks $(S_i,C_i)$ associated to $S$ where $N(C_i) = S_i$
		\STATE $\wmfi(S) \gets \wfill(S) + \sum_{i} \wmfi(R(S_i,C_i))$
		\STATE $\wmfi(\pig{\pc} \gets \min( \wmfi(\pig{\pc}), \wmfi(S) )$
	\ENDFOR
	
	\STATE $\minms{\ind{\pc}}{\pig{\pc}} \gets \{ S \in \ms{\ind{\pc}} \mbox{ s.t. } \wmfi(S) = \wmfi(\pig{\pc})\}$

\end{algorithmic}
\end{algorithm}

\begin{proof}
Our approach is described in Algorithm \ref{alg_this_is_the_only_algorithm_in_the_paper}.
Constructing $\pig{\pc}$ can be done in $O((|X| + r^2)|\pc|^2)$ time as follows.
There are at most $r|\pc|$ vertices of $\pig{\pc}$, one per part of each character.
Recall that a pair of vertices $(A,\chi)$ and $(A',\chi')$ of $\pig{\pc}$ form an edge if and only if there is some $a \in A \cap A'$.
For each $a \in X$, let $\pc(a)$ be the vertices of $\pig{\pc}$ whose cell contains $a$.
These sets are computed in $O(|X||\pc|)$ amortized time by scanning each cell of each character.
The edges of $\pig{\pc}$ are now found by examining each pair $(A_1,\chi_1)(A_2,\chi_2)$ in $\pc(a)$ for all $a \in X$.
To address redundancy, order the characters and cells of each characters, then construct a table to check if $(A_1,\chi_1)$ and $(A_2,\chi_2)$ has already been found as an edge.
Examining $\pc(a)$ to find these edges takes $O(|\pc|^2)$ time (because $a$ is in at most one cell per character), and constructing the redundancy table takes $O((r|\pc|)^2)$ time, for a total of $O((|X| + r^2)|\pc|^2)$ time.

For a general graph $G$ with $|V|$ vertices and $|E|$ edges, it is possible to compute $\ms{G}$ in $O(|V|^3|\ms{G}|)$ time \cite{BBC00} and $\pmc{G}$ in $O(|V|^2|E||\ms{G}|^2)$ time \cite{BT02}.
Let $n$ be the number of vertices of $\pig{\pc}$.
The full block computation and nested for loop can be implemented in $O(n^3|\pmc{\pig{\pc}}|)$ time, which follows from the proof of Theorem 3.4\ in \cite{FKTV08}.
It is known that $|\pmc{G}| \leq |V||\ms{G}|^2 + |V||\ms{G}| + 1$ \cite{BT02}, so the nested for loop takes $O(n^4|\ms{\pig{\pc}}|^2)$ time.

Consider the second for loop and let $S \in \ms{\pig{\pc}}$.
The blocks associated to $S$ are found in $O(n^2)$ time by searching the graph to find the connected components, and then computing $N(C)$ for each connected component $C$ of $G - S$ (in the second computation, each edge of the graph is examined at most once).
By Lemma \ref{lem_nonfull_blocks}, each $(S_i,C_i)$ is a full block of $G$, so we have calculated $\wmfi(R(S_i,C_i))$ during the first for loop.
The calculation on line 17 matches the one in Lemma \ref{lem_decompose} because $\wmfi(R(S,C_i)) = \wmfi(R(S_i,C_i))$ by Corollary \ref{cor_nonfull_blocks}.
It takes $O(n^2)$ time to compute $\wfill(S)$, so the second for loop takes $O(n^2|\ms{\pig{\pc}}|)$ time.
The last line of the algorithm takes $O(|\ms{\pig{\pc}}|)$ time.
Aside from the $O(|X||\pc|^2)$ term, the bottleneck of the algorithm is the first nested for loop and calculating $\pmc{\pig{\pc}}$, so the entire algorithm runs in $\complexity{(r|\pc|)}$ time.
\qed\end{proof}

\textit{Proof of Theorem \ref{result_pp}.}
By Lemma \ref{lem_pp_reduction}, it suffices to compute the $\ind{\pc}-$minimum fill of $\pig{\pc}$.
This takes $\complexity{(r|\pc|)}$ time by Theorem \ref{thm_algorithm}.
\qed

\textit{Proof of Theorem \ref{result_binary_cr}.}
By Theorem \ref{thm_binary_cr_reduction}, it suffices to compute the $\fwt-$minimum / $\ind{\pc}$-minimum fill of $\pig{\pc}$.
Each character has only two states, so this takes $\complexity{|\pc|}$ time by Theorem \ref{thm_algorithm}.
\qed

\textit{Proof of Theorem \ref{result_upp}.}
By Theorem \ref{thm_upp_reduction}, it suffices to determine if $\minms{\ind{\pc}}{\pig{\pc}}$ is a pairwise-parallel set of minimal separators.
Computing $\minms{\ind{\pc}}{\pig{\pc}}$ takes $\complexity{(r|\pc|)}$ time by Theorem \ref{thm_algorithm}.
To determine if $S$ and $S' \in \minms{\ind{\pc}}{\pig{\pc}}$ are parallel, we compute the connected components of $G - S$ in linear time, and then count the number of connected components that have a vertex from $S'$.
$S$ and $S'$ are parallel if and only if this count is one.
This takes at most $O((r|\pc|)^2|\minms{\ind{\pc}}{\pig{\pc}}|^2)$ time, and $\minms{\ind{\pc}}{\pig{\pc}} \subseteq \ms{\pig{\pc}}$ giving a total of $\complexity{(r|\pc|)}$ time.
\qed

\section{Discussion}
An immediate question is whether or not Theorem \ref{result_binary_cr} can be extended to the case where $r$ is unbounded.
This does not seem possible to do for the following reason.
Let $S$ be a minimal separator that is a clique in a minimal triangulation of $\pig{\pc}$ that is an optimal solution to the maximum compatibility problem, and $C_1, C_2, \ldots, C_k$ be the connected components of $\pig{\pc} - S$.
Then the minimal triangulations of $R(S,C_i)$ for $1 \leq i \leq k$ are dependent with respect to a fill weight $\fwt$, unlike the two-state case, as illustrated in Figure \ref{fig_separator}.
It is possible to construct similar examples for unweighted characters.
Hence any sort of separator-based approach for a given optimization function seems to require a decomposition property similar to that of Lemma \ref{lem_decompose}.

\begin{figure}[h]
\centering
\subfigure[$H$]{
   \scalebox{.65}{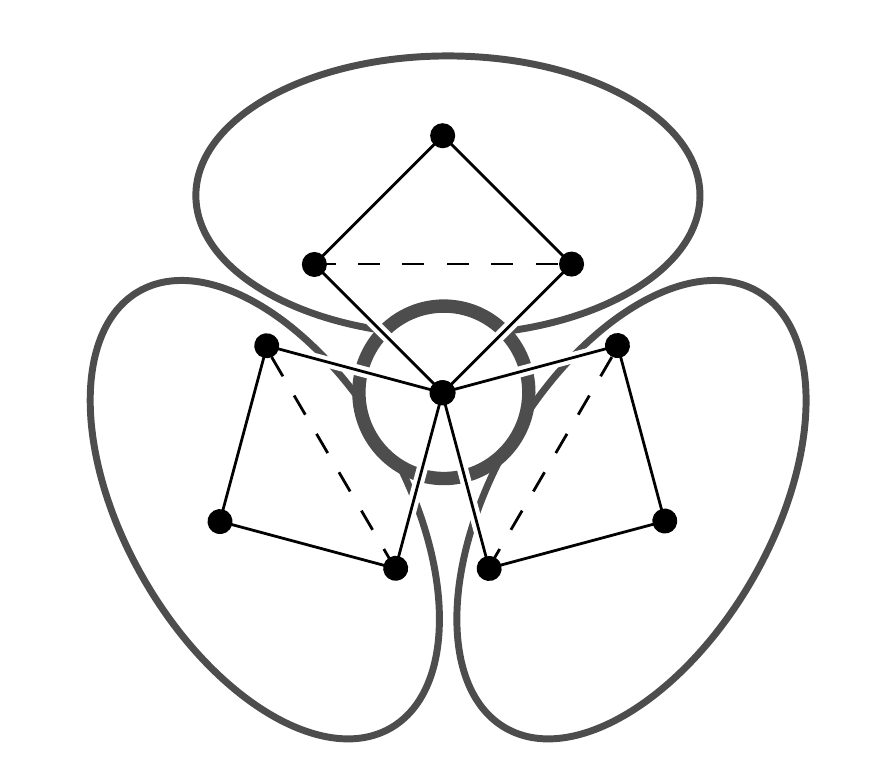}
 }
\subfigure[$H^*$]{
   \scalebox{.65}{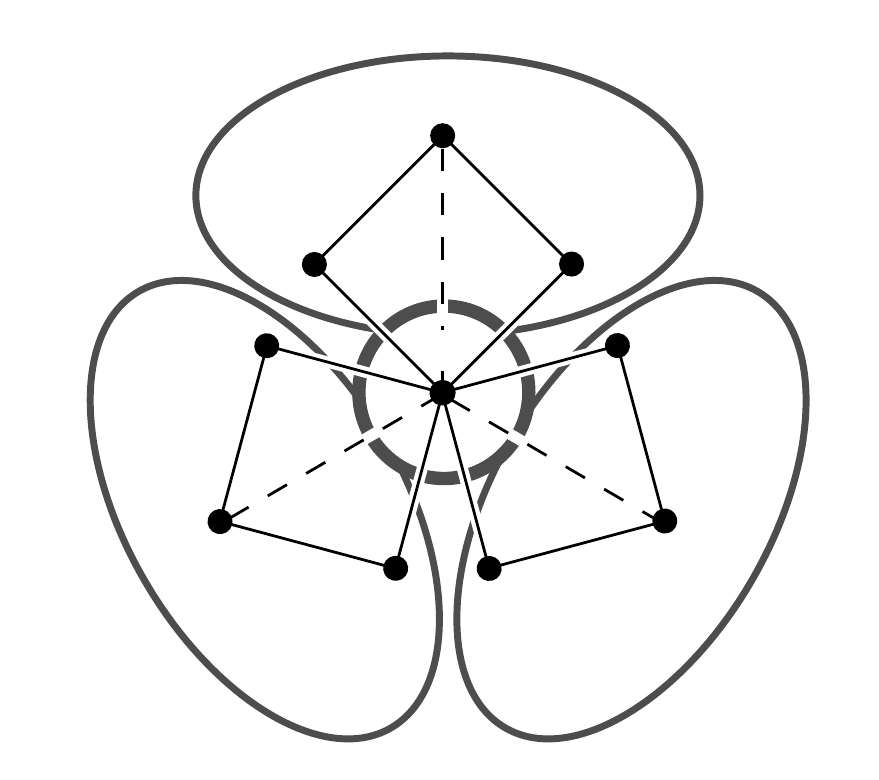}
 }
\caption{
Two triangulations of $\pig{\pc}$ with $\pc = \{abcdef|gh|ij|kl, ag|bh, ci|dj, ek|fl\}$ (as before, $A = abcdef$). 
The connected components have been drawn when $S = \{abcdef\}$ is removed from the graph.
When the characters are weighted by $\wt(abcdef|gh|ij|kl) = 2$ and $\wt(ag|bh) = \wt(ci|dj) = \wt(ek|fl) = 1$, the $\fwt-$minimum fill of the entire graph is given by the fill edges $(abcdef,gh)$, $(abcdef,ij)$, and $(abcdef,kl)$, resulting in $H^*$.
However, for each full block $R(S,C_i)$, the $\fwt-$minimum fill of that block is given adding one of the following fill edges: either $(ag,bh)$, $(ci,dj)$, or $(ek,fl)$, which leads to the suboptimal triangulation $H$.}\label{fig_separator}
\end{figure}

\section{Acknowledgements}
This research was partially supported by NSF grants IIS-0803564 and CCF-1017580.

\bibliographystyle{plain}
\bibliography{arxiv_preprint_1}

\end{document}